\pgfplotsset{compat=1.14} 
\newtheorem{theorem}{Theorem}
\newtheorem{lemma}[theorem]{Lemma}
\newtheorem{corollary}[theorem]{Corollary}
\theoremstyle{definition}
\theoremstyle{definition}
\newtheorem*{remark}{Remark}
\DeclareMathOperator{\rk}{rk}
\definecolor{shadecolor}{gray}{.95}
\title{Subpacketization in Coded Caching with Demand Privacy}
\author{\IEEEauthorblockN{Aravind V R\IEEEauthorrefmark{1} \qquad
Pradeep Sarvepalli\IEEEauthorrefmark{2} \qquad Andrew Thangaraj\IEEEauthorrefmark{3}}\\
\IEEEauthorblockA{Department of Electrical Engineering,
Indian Institute of Technology Madras, India\\
Email: \IEEEauthorrefmark{1}ee13d205@ee.iitm.ac.in,
\IEEEauthorrefmark{2}pradeep@ee.iitm.ac.in,
\IEEEauthorrefmark{3}andrew@ee.iitm.ac.in}}
\begin{document}

\maketitle

\begin{abstract}
    Coded caching is a technique where we utilize multi-casting opportunities to reduce rate in cached networks. One limitation of coded caching schemes is that they reveal the demands of all users to their peers. In this work, we consider coded caching schemes that assure privacy for user demands with a particular focus on reducing subpacketization. For the 2-user, 2-file case, we present a new linear demand-private scheme with the lowest possible subpacketization. This is done by presenting the scheme explicitly and proving impossibility results under lower subpacketization. Additionally, when only partial privacy is required, we show that subpacketization can be significantly reduced when there are a large number of files.
\end{abstract}

\section{Introduction}
Data traffic has been growing rapidly in recent years with content delivery, especially that of multimedia files, contributing a significant part. One important aspect of such traffic is its temporal variation. Network usage during peak demand times could be much higher than the demand in off-peak hours. Caching is a way to alleviate network congestion during peak hours by prefetching popular content nearer to the user during off-peak hours. Depending on the limitations on memory, a part of these files would be prefetched and once the user makes a demand, the rest of the requested file will be transmitted. Early literature on caching focused on cache placement/replacement policies\cite{aggarwal1999caching}, caching architectures\cite{chankhunthod1996hierarchical,michel1998adaptive,povey1997distributed}, web request models \cite{breslau1999web} etc.

Maddah-Ali and Niesen had shown in their seminal paper that coding can achieve significant gain over uncoded caching by making use of multicast opportunities \cite{maddah2014fundamental}. Coded caching achieves an additional \textit{global caching gain}, which is proportional to the number of users.
Their scheme is shown to be order optimal with an information-theoretic lower bound on the number of files needed to be transmitted (known as \textit{rate}). Though the exact lower bound on peak rate is still an open problem several works had investigated this and came up with tighter bounds \cite{sengupta2015improved,ghasemi2017improved,wan2016optimality,yu2017exact}. The problem has been studied in several settings like decentralized caching \cite{maddah2015decentralized}, non-uniform demands \cite{niesen2016coded}, multiple levels of cache \cite{karamchandani2016hierarchical} to name a few. Most of the schemes in these works involve storing the prefetched parts of files in  uncoded form. 
Coded prefetching is investigated in \cite{chen2016fundamental, tian2018caching, gomez2018fundamental}, where linear combinations of subfiles are stored in caches. 
In a few regimes this approach can improve the rate-memory trade-off over uncoded prefetching.

Yan \textit{et al.} developed a structure called \textit{placement delivery arrays} that could model both the placement and delivery schemes in a single array \cite{yan2017placement}. Graphical models for caching have been investigated in \cite{shanmugam2017coded, yan2017bipartite, shangguan2018centralized}. 
Schemes can also be derived using combinatorial designs and linear block codes \cite{tang2018coded}. A limitation with the original centralized scheme was the high subpacketization of files \cite{shanmugam2016finite}.
In the original scheme due to \cite{maddah2014fundamental}, the number of subfiles a file is split into,
increases exponentially with the number of users. 
These  combinatorial models  have helped in developing schemes that have lower subpacketization but with a small penalty on rate \cite{shanmugam2017coded, cheng2017coded}.

One area of particular interest is security and privacy in coded caching. In typical coded caching schemes, other users involved in the multicast or eavesdroppers might get to know the identity of the file a particular user demanded and its contents. 
Furthermore,  users will be able to partially access files which they have not demanded. 
This is in part due to the cache that contains contents of files not requested by them and also because, during delivery, they may be able to decode packets not meant for them.  
Sengupta \textit{et al.} \cite{sengupta2014fundamental} proposed a method for preventing information leakage to an external wiretapper with the use of cryptographic keys. Visakh \textit{et al.} \cite{ravindrakumar2016fundamental} had recently shown that the contents of a file could be revealed only to the user/users who requested it, using secret sharing techniques. 

One aspect that has not been investigated much is the privacy of the user requests in the specific context of coded caching, while it has been studied in closely related areas like index coding \cite{karmoose2017private} and private information retrieval (PIR) \cite{chor1995private}. As we were preparing this manuscript, we became aware of work due to Wan and Caire \cite{wan2019coded} who take a different approach for user request privacy from ours. Another paper by Kamath \cite{kamath2019demand} also addressed the problem of demand privacy and their approach is similar to the one in this work. We point out the specific differences in our results when compared to those from \cite{wan2019coded} and \cite{kamath2019demand} below.

In this work, we explore methods to obtain privacy of each user's requests from the other users in coded caching keeping subpacketization constraints as an important parameter. 

Our specific contributions are as follows:
\begin{compactenum}[i)]
\item We focus on the 2-user, 2-file case in detail and provide an achievable multicast transmission rate versus cache storage curve under a demand privacy constraint.
\item For the 2-user, 2-file case with cache storage of 1 file, we show an explicit demand-private scheme achieving a multicast transmission rate of 2/3 with a subpacketization of 3. This scheme cannot be obtained using the general scheme proposed in \cite{kamath2019demand}, which, in fact, requires a subpacketization of 6. 
\item For the 2-user, 2-file case, we prove some impossibility results on subpacketization of 2 and uncoded cache storage for linear coded caching with demand privacy. These are some of the first negative results in this new area.
\item Finally, we propose a general $K$-user, $N$-file partially demand-private scheme that provides a trade-off between the level of privacy and reduction in subpacketization.
\end{compactenum}

The rest of the paper is organized as follows.
In Section~\ref{sec:ps}, we describe the system setup and the problem statement. 
In Section~\ref{sec:22}, we provide demand-private schemes and an achievable rate vs cache memory curve for the case of two users and two files. We prove certain impossibility results with respect to packetization and coded prefetching. 
In Section~\ref{sec:gen-scheme}, we describe the general scheme for constructing demand-private coded caching schemes from non-private coded caching schemes from \cite{kamath2019demand}, and provide specific instances of the construction from PDAs resulting in lesser subpacketization. 
We also introduce the notion of partially private schemes and show how to construct a partially private scheme.
We conclude with a brief discussion on scope for future work in Section~\ref{sec:conc}. 

\section{Problem Statement}\label{sec:ps}
\subsection{System setup}
Assume that we have a server with $N$ files. 
Each file is assumed to be of $F$ bits and the $i$-th file is denoted $W_i$. 
The server is connected to $K$ users via a multicast link. 
Each user has a cache of size $MF$ bits.
The cache contents of the $i$-th user are denoted $Z_i$.
The system setup is shown in 
Fig.~\ref{fig:setup}.
\begin{figure}[htb]
     \centering
     \begin{tikzpicture}[>=stealth, thick]
        \draw [thick] (0.95,0) rectangle (2.05,1.8);				
        \draw[thick] (0.95,0.45)--(2.05,0.45);						
        \draw[thick] (0.95,0.9)--(2.05,0.9);						
        \draw[thick] (0.95,1.35)--(2.05,1.35);						
        \coordinate [below of=s,node distance=0cm] (s) at (1.5,0) {};
        \coordinate [below of=s,node distance=1.5cm] (b1) {};
        \node[draw, thick, align=center, inner xsep=2pt, minimum height=6mm, below left of=b1,node distance=1.2cm] (z2) {$Z_1$};
        \node[draw, thick, align=center, inner xsep=2pt, minimum height=6mm, left of=z2,node distance=1cm] (z1) {$Z_0$};
        \node[draw, thick, align=center, inner xsep=2pt, minimum height=6mm, right of=z2,node distance=2.8cm] (zk) {$Z_{K-1}$};

        \path[thick] (s) edge node[text width=1cm, right] {$X^D$} coordinate (m) (b1);
        \draw[shift={(m)}](-0.1,-0.1)--(0.1,+0.1);

        \draw[->,thick] (b1) -- (z1.north);
        \draw[->,thick] (b1) -- (z2.north);
        \draw[->,thick] (b1) -- (zk.north);
        \node [right] at (2.15,0.6) {Server};					
        \node [above] at (1.5,1.28) {$W_0$};					
        \node [above] at (1.5,0.83) {$W_1$};					
        \node [above] at (1.5,0.38) {$\vdots$};					
        \node [above] at (1.5,-0.07) {$W_{N-1}$};				
        \path (z2) -- node[auto=false]{\ldots} (zk);
     \end{tikzpicture}
     \caption{Caching system.}
     \label{fig:setup}
 \end{figure}
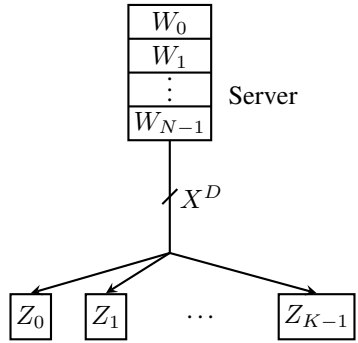
The cache system works in two phases. In the first phase called the \textit{placement phase}, the cache of each user is populated with content by the server. In addition, the server sends metadata or header information $\Theta(Z_i)$ about how the cache content was derived from the files to User~$i$. The header information is assumed to be small in size when compared to the file size but crucial for decoding purposes. Note that during the placement phase  the server is unaware of the files demanded by the users. We  assume that the transmission of cache content and header takes place over a private link between the server and each user.

 In the second phase, called the \textit{delivery phase}, each user requests the server for one  of the files from the set of $N$ files. 
 The demand of the $i$-th user is denoted $D_i$,  where $D_i \in [N]\triangleq\{0,1,\ldots,N-1\}$.
 The demands of all the users $0$ to $K-1$ is denoted by the demand vector $D= (D_0, D_1, \ldots, D_{K-1})$. 
 We  assume that the $D_i$ are all i.i.d. random variables uniformly distributed over
 $[N]$ and that the demands are sent over a private link between the user and the server.
 Based on the demands, the server multicasts $\ell$ packets, typically of the same size.
 The entire multicast transmission from the server is denoted $X^D$ for a demand vector $D$. It consists of $RF$ bits. The transmission $X^D$ depends on the cache $Z_i$ and the demands $D_i$. The quantity $R$ is called the rate of transmission. In addition to $X^D$, some additional metadata or header information about the transmission is typically multicast in coded caching schemes. This metadata, denoted $\Theta(X^D)$, is usually small compared to the file size and provides critical information for decoding by the users.
 
 The main requirement in a coded caching scheme is that User~$i$ should be able to decode the file $W_{D_i}$ using $Z_i$, $\Theta(Z_i)$, $X^D$ and $\Theta(X^D)$. In other words, we require 
 \begin{equation}
     H(W_{D_i}\;|\;Z_i,\Theta(Z_i), X^D, \Theta(X^D)) = 0.
 \end{equation}
 We denote a coded caching scheme with $K$
 users, $N$ files, local cache size $M$, 
 and rate $R$ as a $(K,N;M,R)$ coded caching scheme, or as a $(K,N)$ scheme in short.
 
 \subsection{Demand privacy in coded caching}

We will introduce the notion of demand privacy in coded caching with a simple example. Consider the  $(2,2)$ coded caching scheme due to Maddah-Ali and Niesen \cite{maddah2014fundamental}
shown in Fig.~\ref{fig:mn-2-2}.

    \begin{figure}[htb]
     \centering
     \begin{tikzpicture}[>=stealth, thick,baseline=-1.6cm]
        \node[draw, thick, align=center, inner xsep=20pt, minimum height=6mm] (s) {$A$\\$B$};
        \coordinate [below of=s,node distance=2cm] (b1) {};
        \coordinate [below of=b1,node distance=1cm] (b2) {};
        \node[draw, thick, align=center, inner xsep=2pt, minimum height=6mm, left of=b2,label=below:$Z_0$, node distance=1.2cm] (z2) {
            $A_{0}$, $B_{0}$
        };

        \node[draw, thick, align=center, inner xsep=2pt, minimum height=6mm, right of=b2,label=below:$Z_1$,node distance=1.2cm] (z3) {
            $A_{1}$,   $B_{1}$
        };

        \path[thick] (s) edge node[text width=3cm, right] {
        $X^D$ 
        } (b1);
        
        \draw[-,thick] (s.west) -- (s.east);
        \draw[->,thick] (b1) -- (z2.north);
        \draw[->,thick] (b1) -- (z3.north);
     \end{tikzpicture}%
     \begin{tabular}{cc}
        \toprule
       $D_0D_1$ & X \\ \midrule
        $AA$ &  $A_1 \oplus A_0$\\
        $AB$ &  $A_1 \oplus B_0$\\
        $BA$ &  $B_1 \oplus A_0$\\
        $BB$ &  $B_1 \oplus B_0$\\ \bottomrule
   \end{tabular}
     \caption{Non-private scheme from \cite{maddah2014fundamental} fo
     r $N=2$ files, $K=2$ users and demand vector, $D=(D_1, D_2)$.}
     \label{fig:mn-2-2}
 \end{figure}
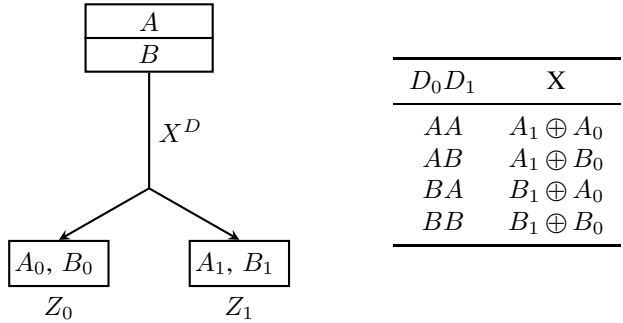

Suppose that the demand is $(A, A)$.  
This results in the transmission $A_1\oplus A_0$. To recover the files, each user must know what linear combination of subfiles has been transmitted. So, we will suppose that the server sends the linear combination information as header along with the transmission. 
It is easy to see that each user can recover the missing portion of the file demanded by them. 
However, the scheme has the unfortunate side effect of revealing the demands of each user to the other parties. From the header and scheme details, it is clear to User~0 that User~1 demanded the file $A$ and vice versa.

If the transmission is $A_i\oplus B_j$, then the $i$-th user  can infer that the $j$-th user has requested $A$ based on the linear combination header information. In general, users can use the combined information of their cache, demands and header data from the server to learn about another user's demands. 

Based on the preceding discussion, to achieve demand privacy in a coded caching  scheme, we impose the following additional condition for all demand vectors $D$: 
\begin{eqnarray}
I(D_i,Z_i,\Theta(Z_i),X^D,\Theta(X^D);D_j)=0,\quad i\ne j.
\label{eq:dpcond}
 \end{eqnarray}
 In other words, we require that the $i$-th user is completely uncertain about what the $j$-th user demands, given all information available to User~$i$ in the coded caching scheme. It can be shown that the standard Maddah-Ali-Niesen scheme \cite{maddah2014fundamental} does not satisfy the demand privacy condition in Eq.~\eqref{eq:dpcond}.

\section{$(K=2,N=2)$ coded caching with demand privacy}
\label{sec:22}
We will first consider the case when there are two files and two users. A complete characterization of the $M$ vs $R$ region in the case of two files/users was one of the starting points of the area of coded caching. Therefore, it is important to fully characterize the same region with demand privacy. We have made some partial progress towards this problem.

First, we will show the design of a linear $(2,2;1,2/3)$ coded caching scheme with demand privacy having a subpacketization (number of parts into which each file is divided) of 3. In comparison, directly converting a $(4,2)$-Maddah-Ali-Niesen scheme into a $(2,2;1,2/3)$ demand-private scheme requires a subpacketization of 6 \cite{kamath2019demand}. 

\subsection{$(M=1, R=2/3)$ scheme with subpacketization 3}
\label{sec:sub3}
The two files, $A$ and $B$, are divided into 3 parts $A_i$, $i=0,1,2$ and $B_i$, $i=0,1,2$. Table \ref{tab:schN2K2num1} summarizes the entire scheme. \begin{table}[H]
   \caption{$(K=2, N=2; M=1, R=2/3)$ demand-private caching scheme with subpacketization 3.
   If server assigns the cache $Z_{i0}$ to User~$i$, then $X^{D_0D_1}$ is the transmission for the demand $D_0D_1$. }
 \begin{tabular}{c m{5em}}
        \toprule
        Notation & Possible cache contents \\ \midrule
        $Z_{00}$ &  $
        \begin{array}{c}
	        A_0\oplus A_1 \\
	        B_0\oplus B_1 \\
	        A_2\oplus B_1
        \end{array}
        $\\\midrule[0.2pt]
        $Z_{01}$ &  $
        \begin{array}{c}
	        A_0\oplus A_1 \\
	        B_0\oplus B_1 \\
	        A_1\oplus B_2
        \end{array}
        $\\\midrule[0.2pt]
        $Z_{10}$ &  $
        \begin{array}{c}
	        A_0\oplus A_2 \\
	        B_0\oplus B_2 \\
	        A_1\oplus B_2
        \end{array}
        $ \\\midrule[0.2pt]
        $Z_{11}$ & $
        \begin{array}{c}
	        A_0 \oplus A_2 \\
	        B_0\oplus B_2 \\
	        A_2\oplus B_1
        \end{array}
        $\\ \bottomrule
   \end{tabular}  \qquad
    \begin{tabular}{cc}
        \toprule
        $D_0D_1$ & $X^{D_0D_1}$ \\ \midrule
        $AA$ &  $
        \begin{array}{c}
            A_0 \\ 
            B_0
        \end{array}
        $\\\midrule[0.2pt]
        $AB$ &  $
        \begin{array}{c}
            A_1 \\
            B_1 
        \end{array}
        $\\\midrule[0.2pt]
            $BA$ &  $
        \begin{array}{c}
            A_2 \\
            B_2 
        \end{array}
        $ \\\midrule[0.2pt]
        $BB$ & $
        \begin{array}{c}
            A_0\oplus A_1\oplus A_2 \\
            B_0\oplus B_1\oplus B_2
        \end{array}
        $\\ \bottomrule
    \end{tabular}
    \label{tab:schN2K2num1}
\end{table}
\begin{table}[htb]
\centering
\caption{Files recovered from possible cache pairs and transmission $X$ for the scheme from Table~\ref{tab:schN2K2num1}}
\resizebox{\columnwidth}{!}{%
\begin{tabular}{|c|cccc|}
\hline
    \diagbox{Caches\\$Z_{0}$,$Z_{1}$}{\raisebox{-1\height}{\ $X$}} 
    & $\begin{array}{c}
            A_0 \\ 
            B_0
        \end{array}$
    & $\begin{array}{c}
            A_1 \\ 
            B_1
        \end{array}$ 
    & $\begin{array}{c}
            A_2 \\ 
            B_2
        \end{array}$ 
    & $\begin{array}{c}
            A_0\oplus A_1\oplus A_2 \\
            B_0\oplus B_1\oplus B_2
        \end{array}$ \\ \hline
        
$Z_{00}$,$Z_{10}$ & $A,A$ & $A,B$ & $B,A$ & $B,B$ \\
$Z_{00}$,$Z_{11}$ & $A,B$ & $A,A$ & $B,B$ & $B,A$ \\
$Z_{01}$,$Z_{11}$ & $B,B$ & $B,A$ & $A,B$ & $A,A$ \\
$Z_{01}$,$Z_{10}$ & $B,A$ & $B,B$ & $A,A$ & $A,B$ \\ \hline
\end{tabular}
\label{tab:schN2K2num2}
}
\end{table}
In the placement phase, the server places either $Z_{i0}$ or $Z_{i1}$, with equal probability, as the cache $Z_i$ for User~$i$. The actual choice is private between the server and User~$i$. 
If User~$i$ was assigned the cache $Z_{i0}$, then 
the multicast transmissions $X^{D_0D_1}$ for each possible demand $(D_0, D_1)$ are as shown in Table \ref{tab:schN2K2num1}. 
It can be seen that all the demands are served. 
It can also be checked that the demands are private under this assignment.
For instance, from Table~\ref{tab:schN2K2num2}, we see that there exists another assignment of cache for each user which recovers another file with the same transmission.

Table \ref{tab:rec1} is the set of recoverable files under each possible cache content for a given transmission.  
\begin{table}[htb]
\centering
\caption{Files recovered from possible caches for a $(2, 2; 1, 2/3)$ private scheme.}
\begin{tabular}{|c|cccc|}
\hline
 & $X^{AB}$ & $X^{BA}$ & $X^{BB}$ & $X^{AA}$ \\ \hline
$Z_{00}$ & $A$ & $B$ & $B$ & $A$ \\
$Z_{01}$ & $B$ & $A$ & $A$ & $B$ \\
$Z_{10}$ & $B$ & $A$ & $B$ & $A$ \\
$Z_{11}$ & $A$ & $B$ & $A$ & $B$ \\ \hline
\end{tabular}
\label{tab:rec1}
\end{table}
For the same transmission, each user is able to recover either file $A$ or file $B$ with the two possible cache contents. Since the actual cache content is private, we readily see that this scheme satisfies the demand privacy condition in Eq.~\eqref{eq:dpcond}.

\subsection{Dual private schemes} 
We show that a $(2,2; M=M_1, R=R_1)$ scheme with demand privacy can be converted into a $(2,2; M=R_1, R=M_1)$ demand-private scheme and this results in symmetric $R$ vs $M$ capacity bounds for the $(2,2)$ case. 

One can observe that the roles of caches and transmissions can be interchanged in the symmetric file recovery matrix in Table \ref{tab:rec1}. Hence, from the scheme given in Table~\ref{tab:schN2K2num1},  we can arrive at a scheme given in Table~\ref{tab:schN2K2dual} with rate $R=1$ for $M=2/3$. We call this scheme the dual of the original scheme.
\begin{table}[htb]
    \centering
    \caption{Dual private $(2, 2; 2/3, 1)$ scheme from the private $(2,2; 1,2/3)$ scheme given in Table~\ref{tab:schN2K2num1}. For cache $Z_{i0}$ at User~$i$, $X^{D_0D_1}$ is the transmission for the demand $D_0D_1$. }
\parbox{.56\linewidth}{
\begin{tabular}{c m{7em}}
        \toprule
        Notation & Possible Cache Contents \\ \midrule
        $Z_{00}$ &  $
        \begin{array}{c}
	        A_1 \\
	        B_1 
        \end{array}
        $\\\midrule[0.2pt]
        $Z_{01}$ &  $
        \begin{array}{c}
	        A_2 \\
	        B_2 
        \end{array}
        $\\\midrule[0.2pt]
        $Z_{10}$ &  $
        \begin{array}{c}
	        A_0\oplus A_1\oplus A_2 \\
	        B_0\oplus B_1\oplus B_2
        \end{array}
        $ \\\midrule[0.2pt]
        $Z_{11}$ & $
        \begin{array}{c}
	        A_0 \\ 
	        B_0
        \end{array}
        $\\ \bottomrule
   \end{tabular}} \quad
\parbox{.38\linewidth}{
    \begin{tabular}{cc}
        \toprule
        $D_1D_2$ & $X^{D_0D_1}$ \\ \midrule
        $AA$ &  $
        \begin{array}{c}
	        A_0\oplus A_2 \\
	        B_0\oplus B_2 \\
	        A_2\oplus B_1
        \end{array}
        $\\\midrule[0.2pt]
        $AB$ &  $
        \begin{array}{c}
	        A_0\oplus A_1 \\
	        B_0\oplus B_1 \\
	        A_2\oplus B_1
        \end{array}
        $\\\midrule[0.2pt]
            $BA$ &  $
        \begin{array}{c}
	        A_0\oplus A_1 \\
	        B_0\oplus B_1 \\
	        A_1\oplus B_2
        \end{array}
        $ \\\midrule[0.2pt]
        $BB$ & $
        \begin{array}{c}
	        A_0\oplus A_2 \\
	        B_0\oplus B_2 \\
	        A_1\oplus B_2
        \end{array}
        $\\ \bottomrule
    \end{tabular} 
   }
    \label{tab:schN2K2dual}
\end{table}

Our next result generalizes the above for all $(2,2)$ private schemes that use one of two caches uniformly at random. 

\begin{lemma}[Duality of transmissions and caches]\label{th:duality}
Suppose that there exists a $(2, 2 ; M=M_1, R=R_1) $ private scheme where the server places one of two possible cache contents uniformly at random. Then, there exists a  $(2, 2; M=R_1, R=M_1) $ private scheme. 
\end{lemma}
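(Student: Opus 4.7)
My plan is to prove the lemma constructively, by swapping the roles of caches and transmissions in the original scheme. The decoder $g(Z,X)$ of the original depends only on the concatenation of the cache and the received transmission, so if we build a dual scheme where the user's cache is what was a transmission in the original and the server's transmission is what was a cache, the same decoder still recovers the correct file. I encode the original scheme's decoding behaviour as a $4\times 4$ file-recovery matrix $G$ with rows indexed by the four possible cache contents $\{Z_{ij}\}$ and columns indexed by the four possible transmissions, the entry $G(Z,X)$ being the file that the owner of $Z$ recovers upon receiving $X$; Table~\ref{tab:rec1} illustrates this matrix for the explicit scheme of the previous subsection.

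The first step is to establish two structural properties of $G$. \emph{Row balance}: each row contains exactly two $A$s and two $B$s. For a fixed user $i$ with cache $Z_{ij}$ and demand $D_i$, the four equiprobable configurations of the other user's cache and demand determine the observed transmission, and the uniform posterior on $D_{1-i}$ given the transmission forces, via a parity argument on the fibres of the configuration-to-transmission map, the number of transmissions consistent with each $(Z_{ij},D_i)$ to be exactly two. \emph{Column balance within each user}: for each transmission $X$ and each user $i$, the two cache options $Z_{i0}, Z_{i1}$ decode $X$ to distinct files, because otherwise $X$ could only be used when user $i$'s demand takes one particular value, contradicting the use of all four transmissions across the four demand vectors.

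The second step is to construct the dual. Using the column property, pair the four transmissions into two complementary pairs, i.e., columns whose entries differ in every row; a short case analysis on the joint profile $(c_0(X), c_1(X))$, where $c_i(X)$ is the cache option $j$ with $g(Z_{ij},X)=A$, shows that such a pairing always exists. Assign one complementary pair to each dual user: let $Z'_{ij}$ be the $j$-th transmission in dual user $i$'s pair, which has size $R_1 F$. For each dual cache assignment $(Z'_{0j_0}, Z'_{1j_1})$ and demand $(D_0,D_1)$, define the dual transmission to be the unique original cache $Z_{kl}$ such that $G(Z_{kl}, Z'_{0j_0})=D_0$ and $G(Z_{kl}, Z'_{1j_1})=D_1$; the complementary structure guarantees existence and uniqueness, yielding a well-defined bijection between demand vectors and original caches. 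The dual transmission then has size $M_1 F$ as required.

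Correctness of the dual is immediate from the defining property of the bijection together with the reuse of $g$. Privacy of the dual follows by running the original privacy argument in the transposed sense: from dual user $i$'s view, the posterior on the other user's demand is uniform because the column balance of $G$ within each user's pair now plays the role that the row balance of $G$ played in the original. The main obstacle is establishing the complementary-pairing existence and uniqueness from the balance properties; once this combinatorial step is in hand, the rest of the verification is bookkeeping.
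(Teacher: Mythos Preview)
Your proof takes essentially the same approach as the paper: interchange the roles of the four caches $\{Z_{ij}\}$ and the four transmissions $\{X^{D_0D_1}\}$ via the file-recovery matrix $G$, then read off the dual scheme. The paper's own argument is terser---it simply asserts that the recovery pattern of Table~\ref{tab:rec1} holds and that the swap works---whereas you supply the structural justification (row balance from privacy, per-user column balance, and the complementary pairing of columns) that the paper leaves implicit.
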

\begin{proof}
Consider a $(2, 2 ; M, R) $ private scheme constructed with users having two options to populate their caches. Let $\{Z_{00}, Z_{01}\}$ be the set of two cache options for User~0 and $\{Z_{10}, Z_{11}\}$ be the set of two cache options for User~1. Let $X^{D_1D_2}$ be the transmission corresponding to the user demands $D=(D_1, D_2)$ and cache $Z_{i0}$ at User~$i$. The sets $\mathcal{Z}=\{Z_{00}, Z_{01}, Z_{10}, Z_{11}\}$ and $\mathcal{X}=\{X^{AA}, X^{AB}, X^{BA}, X^{BB}\}$ are able to recover files $A$ and $B$ as given in Table~\ref{tab:rec1}. Let the size of $Z_i$ be $R_1F$ bits and that of $X^{W_0W_1}$ be $M_1F$ bits. We can interchange the role of these caches and transmissions. Let $\{X^{AB}, X^{BA}\}$ be the set of two cache options for User~0 and $\{X^{BB}, X^{AA}\}$ be the set of two cache options for User~1. Then if $Z_{11}$ is transmitted and the Users 0 and 1 are assigned $\{X^{AB}\}$ and $\{X^{BB}\}$ as their caches, both can recover file $A$. Instead if User~1 had $X^{BA}$ in its cache, the users would have recovered $B$ and $A$, respectively. This way of interchangeability between caches and transmissions gives rise to a new scheme for 2 users and 2 files, where the cache size is $M_1$ bits and transmission size is $R_1$ bits.
\end{proof}
A consequence of the above duality is that the achievable trade-off between memory and rate for 
$(2,2)$ private schemes is symmetric about the line  $M=R$. 

\begin{lemma}[Time sharing with file splitting]\label{lm:timesharing} Given two achievable $(M,R)$ pairs for a 
$(2,2)$ private scheme, all values of $(M,R)$ along the line joining these points are achievable.
\end{lemma}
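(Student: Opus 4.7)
The plan is the standard time-sharing via file-splitting argument, with an additional check that privacy is preserved when two private schemes are run in parallel on independent file parts. Let the two achievable pairs be $(M_1,R_1)$ and $(M_2,R_2)$, realized by $(2,2)$ private schemes $\mathcal{S}_1$ and $\mathcal{S}_2$ respectively, and fix $\lambda \in [0,1]$. First I would split each file $W_i$ of $F$ bits into two parts $W_i^{(1)}$ of $\lambda F$ bits and $W_i^{(2)}$ of $(1-\lambda)F$ bits. I would then apply $\mathcal{S}_1$ to the pair $(W_0^{(1)}, W_1^{(1)})$ and $\mathcal{S}_2$ to the pair $(W_0^{(2)}, W_1^{(2)})$, using \emph{independent} placement randomness in the two subschemes, so that the cache at User~$i$ is the concatenation $Z_i = (Z_i^{(1)}, Z_i^{(2)})$ and the server transmission for any demand $D$ is $X^D = (X^{D,(1)}, X^{D,(2)})$, with headers concatenated analogously.

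Next I would verify parameters and correctness. The cache occupies $\lambda M_1 F + (1-\lambda)M_2 F$ bits and the transmission occupies $\lambda R_1 F + (1-\lambda)R_2 F$ bits, so the combined scheme realizes $(M,R) = (\lambda M_1 + (1-\lambda)M_2,\ \lambda R_1 + (1-\lambda)R_2)$, which traces out the line joining the two given points as $\lambda$ varies. Correctness is immediate: User~$i$ uses the view from $\mathcal{S}_1$ to decode $W_{D_i}^{(1)}$ and the view from $\mathcal{S}_2$ to decode $W_{D_i}^{(2)}$, then concatenates.

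The step that requires real care is privacy. Let $U_i^{(k)} = (Z_i^{(k)}, \Theta(Z_i^{(k)}), X^{D,(k)}, \Theta(X^{D,(k)}))$ denote the view of User~$i$ in subscheme $k$. I need to show $I(D_i, U_i^{(1)}, U_i^{(2)}; D_j) = 0$. Since the two subschemes use independent placement randomness and operate on disjoint (hence independent) file parts, the views $U_i^{(1)}$ and $U_i^{(2)}$ are conditionally independent given $(D_i, D_j)$. The privacy of each individual $\mathcal{S}_k$ gives $D_j \perp (D_i, U_i^{(k)})$, which combined with $D_j \perp D_i$ implies $p(U_i^{(k)} \mid D_i, D_j) = p(U_i^{(k)} \mid D_i)$ for each $k$. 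Multiplying these equalities by virtue of the conditional independence above yields
\[
p(U_i^{(1)}, U_i^{(2)} \mid D_i, D_j) \;=\; p(U_i^{(1)} \mid D_i)\, p(U_i^{(2)} \mid D_i),
\]
whose right-hand side does not depend on $D_j$; together with $D_j \perp D_i$ this gives the required mutual-information identity in Eq.~\eqref{eq:dpcond}.

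The main obstacle is precisely this last step: the lemma is not a mere rate-versus-cache accounting, and one must carefully justify conditional independence of the two views $U_i^{(1)}, U_i^{(2)}$ given only the demand pair $(D_i,D_j)$. This is where the design choice to use independent placement randomness and disjoint file parts across the two subschemes becomes essential; a careless construction that shared randomness or correlated the two placements could in principle leak information about $D_j$ to User~$i$ even when each subscheme is individually private.
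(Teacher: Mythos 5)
Your proposal follows essentially the same file-splitting time-sharing construction as the paper's proof: split each file in proportion $\lambda:(1-\lambda)$, run the two private schemes in parallel on the two halves, and add up cache and rate. The only substantive difference is that you explicitly verify the demand-privacy condition of the combined scheme (via independent placement randomness and conditional independence of the two sub-views given $(D_i,D_j)$), a step the paper's proof asserts without argument; your verification is correct and arguably makes the proof more complete.
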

\begin{proof}
Consider $0 \leq \alpha \leq 1$. 
Split the file $A$ into two parts $A_\alpha$ and $A_{\bar{\alpha}}$ of size 
$\alpha F$ bits and $(1-\alpha)F$ bits respectively. 
Similarly, split $B$ into $B_\alpha $ and $B_{\bar{\alpha}}$.
Denote the two achievable private caching schemes as $(2, 2; M, R) $ and $(2, 2; M', R') $
respectively. 
We can use the $(2, 2; M, R)$ scheme for sharing $A_\alpha$ and $B_{\alpha}$ 
 and the $(2, 2; M', R') $  scheme for sharing $A_{\bar{\alpha}}$, and $B_{\bar{\alpha}}$.
 The overall scheme shares $A$ and $B$ with effective cache size $(\alpha M +(1-\alpha) M')F$ bits
 and transmission $(\alpha R+(1-\alpha)R')F$ bits
 giving a $(2, 2; \alpha M+(1-\alpha)M', \alpha R+(1-\alpha)R' ) $ private scheme. 
\end{proof}
Note that the time sharing scheme in Lemma~\ref{lm:timesharing} has a subpacketization that is 
equal to the sum of the two 
schemes used for time sharing.
Using Lemma~\ref{lm:timesharing}, and Lemma~\ref{th:duality} we can plot the upper bounds for the achievable $(M,R)$ pair for $(2, 2)$ private schemes. 
The plot is symmetric about the line $M=R$ as can be seen in Fig.~\ref{fig:mr} 

\begin{figure}
    \centering
    \begin{tikzpicture}
    	\begin{axis}[
    	    width=7cm, height=7cm,
			axis x line=center, 
			axis y line=middle, 
    		xlabel=$M$,ylabel=$R$,
            xmin=0, xmax=2.5,
            ymin=0, ymax=2.5,
        	domain=0:pi/2,
        	xtick={
        		0, 0.66, 1, 2
        	},
        	xticklabels={
        		0, $\frac{2}{3}$, 1, 2
        	},
        	ytick={
        		0.66, 1, 2
        	},
        	yticklabels={
        		$\frac{2}{3}$, 1, 2
        	}
    		]
    
    	\addplot[color=blue,mark=*] coordinates {
    		(0,2)
    		(0.66,1)
    		(1,0.66)
    		(2,0)
    	};
    	\draw [black, thin,dashed] (0,1) -- (0.66,1);
    	\draw [black, thin,dashed] (0,0.66) -- (1,0.66);
    	\draw [black, thin,dashed] (1,0) -- (1,0.66);
    	\draw [black, thin,dashed] (0.66,0) -- (0.66,1);
    	\end{axis}%
    \end{tikzpicture}%
    \caption{Achievable $(M,R) $ region  for $(2, 2; M, R)$ private schemes.
    The $(2, 2; 1, 2/3)$ scheme and its dual scheme  $(2, 2; 1, 2/3)$  have a subpacketization of three subfiles. 
    The straight lines are due to Lemma~\ref{lm:timesharing}. For these schemes, the subpacketization need not be three. }
    \label{fig:mr}
\end{figure}
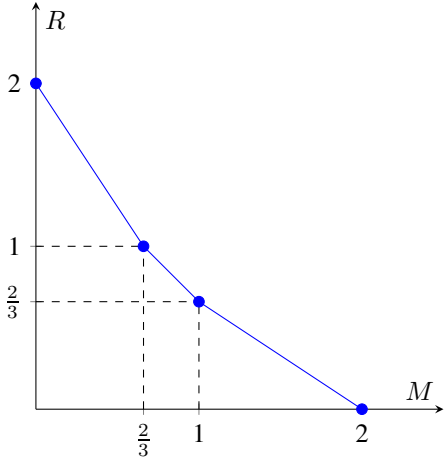

\subsection{Towards lower bounds and optimal subpacketization}
In the non-private case, the $M$ vs $R$ region is fully characterized for two users and two files. For the case with demand privacy, it is not clear whether any of the points in the achievable $M$ vs $R$ curve shown in Fig. \ref{fig:mr} are optimal, or if the subpacketizations are optimal. 

While we do not have lower bounds and optimality results yet, we present a few basic impossibility results involving subpacketization and coding of cache contents. 

In the non-private case for two users/files, subpacketization of 2 suffices to result in optimal rate of $R=1/2$ for $M=1$. For the private case, we have the following result.

\begin{lemma}
Consider $N=2$, $K=2$ with subpacketization of 2 and $M=1$. A rate $R=1/2$ cannot be achieved with demand privacy when using a linear scheme.
\label{lem:nosub2}
\end{lemma}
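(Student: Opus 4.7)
My plan is to work in the vector-space model: treat the four subfiles $A_0, A_1, B_0, B_1$ as basis vectors of a $4$-dimensional space $\mathcal{A} \oplus \mathcal{B}$, with $\mathcal{A} = \langle A_0, A_1 \rangle$ and $\mathcal{B} = \langle B_0, B_1 \rangle$, and regard every cache as a $2$-dimensional subspace and every transmitted packet as a single vector. I will (i) use the decoding constraints to force a rigid ``diagonal'' form on every cache the scheme uses; (ii) use the decoding constraints for all four demand vectors to obtain a disjointness condition between the two users' caches; and (iii) show that the $\mathcal{B}$-component of the transmission, which User~$0$ can read from the transmission header, is a deterministic function of $D_1$, contradicting~\eqref{eq:dpcond}.

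For Step~(i), User~$0$ must decode $W_{D_0}$ from $V_0 + \langle X \rangle$ for both $D_0 = A$ and $D_0 = B$, and $V_0 + \langle X \rangle$ has dimension at most $3$. The dimension bound $\dim(V_0 + W_{D_0}) \le 3$ gives $\dim(V_0 \cap \mathcal{A}) \ge 1$ and $\dim(V_0 \cap \mathcal{B}) \ge 1$; together with $\mathcal{A} \cap \mathcal{B} = \{0\}$ and $\dim V_0 = 2$, this forces $V_0 = \langle u_A \rangle \oplus \langle u_B \rangle$ with nonzero $u_A \in \mathcal{A}$ and $u_B \in \mathcal{B}$, and analogously $V_1 = \langle u_A' \rangle \oplus \langle u_B' \rangle$. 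For Step~(ii), every cache pair $(V_0, V_1)$ that is placed with positive probability must serve all four demand vectors. For the demand $(A,B)$ the $\mathcal{B}$-component of the transmission must lie in $\langle u_B \rangle$ (so that User~$0$ can decode $A$) and lie outside $\langle u_B' \rangle$ (so that User~$1$ can decode $B$); this is infeasible when $\langle u_B \rangle = \langle u_B' \rangle$. Hence $\langle u_B \rangle \ne \langle u_B' \rangle$, and a symmetric argument using the demand $(B,A)$ yields $\langle u_A \rangle \ne \langle u_A' \rangle$, both holding with probability one.

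For Step~(iii), fix User~$0$'s observation: $D_0 = A$, $V_0 = \langle u_A \rangle \oplus \langle u_B \rangle$, and a transmission $X$, whose unique decomposition $X = X_A + X_B$ with $X_A \in \mathcal{A}$, $X_B \in \mathcal{B}$ is computable from the header. The same linear-algebra argument as in Step~(i) pins $X_B \in \langle u_B \rangle$. If $D_1 = A$, User~$1$'s decoding constraint further requires $X_B \in \langle u_B' \rangle$, and by the disjointness from Step~(ii) this intersection is $\{0\}$, forcing $X_B = 0$. If $D_1 = B$, User~$1$'s constraint instead requires $X_B \notin \langle u_B' \rangle$, which rules out $X_B = 0$ and leaves $X_B \in \langle u_B \rangle \setminus \{0\}$. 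The two supports $\{0\}$ and $\langle u_B \rangle \setminus \{0\}$ are disjoint, so User~$0$ determines $D_1$ exactly from $X_B$, violating the demand-privacy condition.

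The delicate point is Step~(iii): I must ensure the argument is robust to the randomization the scheme is allowed to use---both the server's random choice of $V_1$ and any internal randomness used to construct $X$. The rigidity forced in Step~(i) and the disjointness forced in Step~(ii), together with the fact that $R = 1/2$ leaves only a single degree of freedom for the transmission, remove every knob the server might have used to blur the dichotomy $X_B = 0$ versus $X_B \ne 0$; that is what ultimately makes the leak unavoidable.
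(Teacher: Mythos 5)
Your proof is correct and takes essentially the same route as the paper's: both reduce every cache to one $\mathcal{A}$-dimension plus one $\mathcal{B}$-dimension and then play the elimination condition ($X_B$ proportional to the cached $B$-combination) against the recovery condition ($X_B$ independent of it) across the demands $(A,A)$ and $(A,B)$. The only difference is organizational --- the paper invokes privacy first to force $X_B^{AA}\neq 0$ and derives a decodability contradiction, whereas you use decodability of $(A,B)$ to force the two cached $B$-directions apart, conclude $X_B^{AA}=0$, and exhibit the privacy leak directly --- i.e., the same argument read in the contrapositive.
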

\begin{proof}
A proof is given in Appendix~\ref{sec:nogo}.
\end{proof}

For subpacketization of 3, the scheme in Section \ref{sec:sub3} uses coded cache contents, which is not typical in the non-private setting. In the setting considered here for demand privacy, we have the following result on coding in cache contents.

\begin{lemma}
Consider $N=2$, $K=2$ with subpacketization of 3 and $M=1$. If the cache contents are not allowed to be coded (i.e. linear combinations of two or more file parts are not allowed to be stored in cache), a rate $R=2/3$ cannot be achieved with demand privacy when using a linear scheme.
\label{lem:nosub3noncoded}
\end{lemma}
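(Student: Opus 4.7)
We argue by contradiction. Suppose such a linear private scheme with uncoded caches exists. First, observe that every admissible cache of three uncoded subfiles must contain at least one subfile of each of $A$ and $B$: otherwise the user cannot recover the absent file, since three missing subfiles cannot be supplied by two transmission components (a simple rank count on the span of cache plus transmission). Hence every cache is of type $2+1$ (two subfiles of one file and one of the other) or $1+2$.

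Next, I would extract rigid coefficient constraints on the transmission from each cache-demand pair. For a cache $Z_0 = \{A_i, A_j, B_k\}$ of type $2+1$ with missing $A$-index $\ell$ and missing $B$-indices $m, n$, the transmission modulo $Z_0$ lies in the $3$-dimensional quotient $\langle A_\ell, B_m, B_n\rangle$ and has span of dimension at most two. For demand $D_0 = B$, this span must coincide with $\langle B_m, B_n\rangle$ in order to recover both $B$-subfiles, which forces the $A_\ell$-coefficient of every transmission component to vanish. For demand $D_0 = A$, in contrast, the span must contain $A_\ell$, so the $A_\ell$-coefficient is nonzero in at least one component. A symmetric dichotomy (with $A \leftrightarrow B$) holds for type $1+2$ caches.

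Finally, I would invoke demand privacy. Fix any $(Z_1^*, D_1^*)$ with positive probability. Privacy forces the conditional distribution of the transmission given $D_0$, marginalized over User~0's cache, to be independent of $D_0$. However, the vanishing patterns derived above are rigidly tied to $(Z_0, D_0)$: under $D_0 = B$ with a type-$2+1$ cache, the coefficient $A_{\ell(Z_0)}$ must vanish, whereas under $D_0 = A$ with the same cache it cannot. Therefore any transmission whose $A$-coefficient vector is nowhere zero has probability $0$ under $D_0 = B$ when User~0 uses only $2+1$ caches, which by privacy forces the server to also avoid such transmissions under $D_0 = A$; tracking these forbidden coefficient-support patterns (and the analogous ones for the $B$-coefficients from User~1's viewpoint) across all admissible caches yields a configuration that cannot simultaneously meet decodability for every demand vector.

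The main obstacle is the last step, where User~0's cache distribution can be any mixture over the eighteen valid $2+1$ and $1+2$ caches, and for $1+2$ caches the decoding constraint only forces a proportionality among certain coefficients rather than outright vanishing. I would handle this by exploiting the $S_3$ permutation symmetry of subfile indices together with the $A \leftrightarrow B$ involution to reduce to a small set of canonical configurations and eliminate each directly, verifying that in every case the required marginal matching between $D_0 = A$ and $D_0 = B$ collides with the decodability requirement at either User~0 or User~1.
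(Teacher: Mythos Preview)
Your first two paragraphs are correct and mirror the paper's setup: the $2{+}1$ structure of every uncoded cache and the coefficient constraints you extract (vanishing of the missing $A$-index under $D_0=B$, nonvanishing under $D_0=A$) are exactly the ingredients the paper uses.

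The gap is in your last two paragraphs, which you yourself flag as an ``obstacle'' and then only promise to handle by symmetry reduction. That is not a proof. Concretely, your distributional statement---that privacy forces the transmission law marginalized over $Z_0$ to be independent of $D_0$---does not by itself produce a contradiction, because the server knows $(Z_0,Z_1,D)$ and can tailor the transmission to each realization; the vanishing/nonvanishing patterns you identified constrain the transmission \emph{conditionally} on $Z_0$, and nothing you have written rules out that different $Z_0$'s under $D_0=A$ and $D_0=B$ could produce the same marginal. You also never use $Z_1$'s decoding constraints, which in the paper's proof are what actually pin things down.

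The paper avoids the distributional route entirely. It fixes $Z_0=\{A_0,A_1,B_2\}$ (WLOG) and then, by analyzing the transmissions $X^{BA}$, $X^{AA}$, and $X^{AB}$ one at a time, shows successively that any $Z_1$ compatible with $Z_0$ must (i) contain $A_2$, (ii) not contain two $B$-subfiles, and (iii) not contain $B_2$. This leaves only four candidates for $Z_1$, equivalent under relabeling. Taking $Z_1=\{A_1,A_2,B_0\}$, the decoding constraints on $X^{AB}$ from both users force enough coefficients to vanish or have low rank that \emph{no} alternative cache for User~0 (among those consistent with $Z_1$) can recover $B$ from $X^{AB}$, which directly violates privacy. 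The work you deferred to ``eliminate each directly'' is precisely this chain of four concrete eliminations, and it is where the argument lives.
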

\begin{proof}
A proof is given in Appendix~\ref{sec:no-uncoded}.
\end{proof}

\section{General Scheme and Partial Privacy}\label{sec:gen-scheme}
In this section, we describe the general scheme from \cite{kamath2019demand} that provides the design of a demand-private coded caching scheme from non-private schemes.

\begin{theorem}[Existence of private schemes \cite{kamath2019demand}]\label{th:existence}
If there exists  a $(KN,N;M,R)$ coded caching scheme, then there exists a private $(K,N;M,R)$ scheme.
\end{theorem}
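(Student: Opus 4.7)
My plan is to reduce to the non-private scheme by a ``virtual user'' construction. I would view the $KN$ users of the given $(KN,N;M,R)$ scheme as being organized into $K$ groups of $N$ virtual users each, one group per real user. Each real user is secretly assigned one virtual slot within their group, uniformly at random, and receives the corresponding virtual cache. During delivery, the server fabricates a full length-$KN$ virtual demand vector that (i) contains the real user's actual demand at their secret slot and (ii) is distributed in a way that reveals nothing about the real demands to other users. The transmission is then exactly the one produced by the non-private scheme on this virtual demand vector.

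\textbf{Construction.} Index the virtual users as $(i,k)$ with $i\in[K]$ and $k\in[N]$, and let $Z_{(i,k)}$ and the header $\Theta(Z_{(i,k)})$ be the cache and metadata the non-private scheme would assign to virtual user $(i,k)$. In the placement phase, for each real user~$i$ the server draws $\pi_i\in[N]$ independently and uniformly, privately sends $\pi_i$ together with $Z_{(i,\pi_i)}$ and $\Theta(Z_{(i,\pi_i)})$, and sets these as $Z_i$ and $\Theta(Z_i)$. In the delivery phase, given real demands $D=(D_0,\dots,D_{K-1})$, define the virtual demand vector $\tilde D$ by
\begin{equation}
\tilde D_{(i,k)} \;=\; (D_i + k - \pi_i)\bmod N,\qquad i\in[K],\ k\in[N],
\end{equation}
so that $\tilde D_{(i,\pi_i)}=D_i$. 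The server then multicasts $X^{\tilde D}$ and $\Theta(X^{\tilde D})$ exactly as in the non-private scheme.

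\textbf{Correctness and parameters.} Real user~$i$ holds $\pi_i$, $Z_i=Z_{(i,\pi_i)}$ and the headers, so they can run the decoder of virtual user $(i,\pi_i)$ on $X^{\tilde D}$ and recover $W_{\tilde D_{(i,\pi_i)}}=W_{D_i}$. The cache size is $MF$ and the transmission size is $RF$ because both are inherited unchanged from the non-private $(KN,N;M,R)$ scheme; thus the resulting scheme is a $(K,N;M,R)$ scheme.

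\textbf{Privacy and main obstacle.} To verify \eqref{eq:dpcond}, I need to show that for any $j\ne i$, conditioned on user~$j$'s total view $\mathcal V_j=(D_j,\pi_j,Z_j,\Theta(Z_j),X^{\tilde D},\Theta(X^{\tilde D}))$, the demand $D_i$ is still uniform on $[N]$. The crucial observation is that within group~$i$ the virtual demands $\tilde D_{(i,0)},\dots,\tilde D_{(i,N-1)}$ are the cyclic shift of $(0,1,\dots,N-1)$ by $s_i=(D_i-\pi_i)\bmod N$; hence everything user~$j$ sees about group~$i$ is a deterministic function of $s_i$ (together with quantities independent of group~$i$). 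Because $\pi_i$ is drawn uniformly and privately, and is independent of $(D_i,D_j,\pi_j)$, the shift $s_i$ is uniform on $[N]$ and independent of $D_i$ conditional on $\mathcal V_j\setminus\{s_i\}$; so $D_i$ remains uniform given $\mathcal V_j$. The main obstacle is executing this independence argument cleanly: one must check that the caches and headers of the virtual users in group~$i$ other than $(i,\pi_i)$ do not leak $\pi_i$ to user~$j$ (they do not, since user~$j$ never receives them), and that the non-private scheme's transmission/header $\bigl(X^{\tilde D},\Theta(X^{\tilde D})\bigr)$ depends on group~$i$ only through the tuple $(\tilde D_{(i,k)})_k$, which in turn depends on $(D_i,\pi_i)$ only through $s_i$. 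Once these conditional independences are laid out, $I(D_j,\mathcal V_j;D_i)=0$ follows from the uniformity of $s_i$.
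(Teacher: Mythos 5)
Your proposal is correct and is essentially the paper's own construction: the paper likewise partitions the $KN$ virtual users into $K$ groups of $N$, assigns each real user a uniformly random cache within their group (their $r_k$ is your $\pi_i$), and extends the demand vector by the same cyclic shift $d_j' = d_k - r_k + j \bmod N$. Your privacy argument via the uniform, demand-independent shift $s_i$ is just a slightly more explicit rendering of the paper's observation that all $N$ caches in a group recover distinct files and are equally likely to have been assigned.
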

\begin{proof}  

Assume that we have a $(KN, N; M, R)$ non-private scheme. 
Let the cache contents of each of the users be given as 
$Z_i'$, where $0\leq i <NK$.

Partition the users into sets of size $N$.
Without loss of generality we partition the $NK$ users as 
\begin{eqnarray}
\mathcal{U}_k= \{(k-1)N \leq j <k N \}. \label{eq:users-in-k}
\end{eqnarray}

Denote the cache of the $k$th user of the private scheme as $Z_k$.
This is chosen as follows:
\begin{eqnarray}
{Z}_k &=& Z_{(k-1)N+r_k}' \label{eq:cache-zk-npsch}
\end{eqnarray}
where $r_k$ is uniformly distributed 
on $\{0, 1, \ldots, N-1 \}$.

During delivery the server receives the demand vector $(d_0,\ldots, d_{K-1})$.
The server then generates the transmission corresponding to the demand vector of the
non-private scheme. 
This demand vector is of length $NK$ and denoted $D'=(d_j')$.
We can assign any random permutation of the demands $[N] $ to the users in
$\mathcal{U}_k$ subject to the condition that 
 $d_{r_k+(k-1)N}' = d_k$. 
Formally, 
\begin{subequations}
\begin{eqnarray}
\pi_k&:&\mathcal{U}_k   \rightarrow  [N] \label{eq:pi-k}\\
d_j'&=& \pi_k(j)\label{eq:pi-k-1}\\
d_{r_k+(k-1)N}'&=& \pi_k(r_k+(k-1)N) =d_k\label{eq:pi-k-2}
\end{eqnarray}
\end{subequations}

Denote the demand vector of the non-private
$(KN, N)$ scheme as 
$D'=(d_j')_{j\in [NK]}$
Since the non-private scheme can accommodate all demands, it can also serve this demand.
Transmit $X^{D'}$ as per the non-private scheme. 
Then each user of the private scheme is able to receive the file requested.

Demand privacy can be shown as follows. 
The $i$-th user of the  private scheme is able to recover the file he or she requested. 
The same transmission can be used to recover all the files by the caches 
$Z_{(j-1)K}', \ldots, Z_{jK-1}'$.
However, the $i$-th user does not know which of these caches has been assigned to the $j$-th user. 
Since all of them are equally likely to be assigned to $j$-th user by construction, the uncertainty about the demand $D_j$ given $D_i, X, Z_i$ is $H(D_j)$.
Thus, the privacy of demands is preserved. 

Observe that the cache size of users in the private scheme is same as the size of the cache in the non-private scheme. 
Similarly, the rate of transmission for the 
private scheme is exactly the same as that of the non-private scheme. 
From this it follows the demand private scheme has the parameters $(K, N; M, R)$ as claimed. 
\end{proof}

\begin{remark}[Extended demand vector]
While creating the extended demand vector $D'$ we can make a simple choice for
$\pi_k$.
The demand of the $j$th user of the non-private scheme 
is given as 
\begin{eqnarray}
d_j' = d_k-r_k+j \bmod N \mbox{ for } (k-1)N\leq j<kN,\label{eq:dd-np-KN-N}
\end{eqnarray} 
where $0\leq k <K$.
\end{remark}

\subsection{Constructions using Maddah-Ali-Niesen schemes and PDAs}
 Using the Maddah-Ali-Niesen scheme \cite{maddah2014fundamental} as the non-private scheme in Theorem \ref{th:existence}, we obtain the following:
 \begin{corollary}
 There exists a demand private $(K, N; M, R)$ scheme for integer values of $KM$, where the rate 
   \begin{align}
      R = 
  \begin{cases} 
   \frac{K(N-M)}{(1+KM)} & \text{if } M \geq \frac{K-1}{K} \\
   N-M       & \text{if } M < \frac{K-1}{K}
  \end{cases}.
  \end{align}
 \end{corollary}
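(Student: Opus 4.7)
The plan is to invoke Theorem \ref{th:existence} with a carefully chosen non-private scheme on $KN$ users, and then verify by direct calculation that the promised rate expression results in each regime of $M$.

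First I would instantiate the Maddah-Ali-Niesen scheme with parameters $(K'=KN, N; M, R_{\text{MN}})$. The standard MN rate for $K'$ users is $R = K'(1-M/N)/(1+K'M/N)$, valid whenever $K'M/N$ is a nonnegative integer. Substituting $K'=KN$, the condition becomes $KM \in \mathbb{Z}_{\geq 0}$ (matching the corollary's hypothesis), and algebraic simplification gives
\begin{equation*}
R_{\text{MN}} = \frac{KN(1-M/N)}{1+KM} = \frac{K(N-M)}{1+KM}.
\end{equation*}
Applying Theorem \ref{th:existence} immediately yields a demand-private $(K,N;M, K(N-M)/(1+KM))$ scheme.

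Next I would handle the small-memory regime using the trivial uncoded transmission scheme for $KN$ users. For any $(K',N)$ setting with cache size $M$, the server can simply broadcast the $N-M$ unstored portion of each file, producing a $(KN, N; M, N-M)$ non-private scheme; applying Theorem \ref{th:existence} again delivers a private $(K,N;M,N-M)$ scheme.

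The final step is to select the better of the two rates as a function of $M$. Comparing the two expressions, $N-M \leq K(N-M)/(1+KM)$ is equivalent (since $N>M$) to $1+KM \leq K$, i.e.\ $M \leq (K-1)/K$. Hence the trivial scheme strictly dominates for $M < (K-1)/K$, while the MN-based scheme is at least as good for $M \geq (K-1)/K$; taking the minimum in each regime produces exactly the piecewise rate claimed.

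I do not anticipate a real obstacle here: the corollary is essentially a plug-in application of Theorem \ref{th:existence} together with a standard rate comparison. The only mild care needed is to confirm that the integrality hypothesis $KM \in \mathbb{Z}$ of the corollary corresponds to the usual integrality hypothesis $K'M/N \in \mathbb{Z}$ of the MN scheme under $K' = KN$, and (in the trivial regime) to note that the uncoded scheme imposes no such integrality constraint, so only the MN branch requires it.
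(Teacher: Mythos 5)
Your proposal is correct and follows essentially the same route as the paper: apply Theorem~\ref{th:existence} to the Maddah-Ali--Niesen scheme instantiated with $K'=KN$ users, noting that $K'M/N = KM$ gives the required integrality. The only cosmetic difference is that you re-derive the $N-M$ branch via an explicit uncoded broadcast and a rate comparison, whereas the paper obtains both branches directly from the $\min$ in Maddah-Ali--Niesen's Theorem~1 (whose second term is exactly that trivial broadcast).
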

 \begin{proof}
 This follows from Theorem~\ref{th:existence} using the scheme proposed by Maddah Ali and Niesen \cite[Theorem~1]{maddah2014fundamental}.
 In this case, for integer values of $KM$ we can construct a $(NK, N; M , R)$ non-private scheme.
 If $KM\geq K-1$, then 
  $R=\frac{K(N-M)}{(1+KM)}$. 
  If $1\leq KM < K-1$, then $R=N-M$.
  We can map each user to a user in the non-private scheme using Eq.~\eqref{eq:cache-zk-npsch} and extend the demand vector of the private $(K,N;M,R)$ scheme to the non-private scheme using Eq.~\eqref{eq:dd-np-KN-N}. Then the scheme from \cite{maddah2014fundamental} gives the cache contents that should be stored in each user and a transmission for each demand from which each user can recover their files. The cache memory and rate required in the private scheme will be the same as that in the non-private scheme.
 \end{proof}
 Note that there is no coding gain when $KM < K-1$.
 
A general framework for non-private coded caching schemes was proposed in \cite{yan2017placement} using placement delivery arrays (PDAs).
We can convert many of these schemes to private coded caching schemes. 
Some of them improve upon those derived from schemes \cite{maddah2014fundamental} in subpacketization or other parameters. For positive integers $K$, $f$, $Z$ and $S$, a $(K, f, Z, S)$ placement delivery array is a $f\times K $ matrix  $(P=[p_{j,k}]\mbox{ with }j\in[F],k\in[K])$ containing either a ``$\Asterisk$'' or integers from $\{0,1,\ldots,S-1\}$ in each cell such that they satisfy a few conditions \cite{yan2017placement}. 
Here, $f$ is the subpacketization, and $S$ is the total number of transmissions each of size $1/f$ of the file.
For any $N$, we can obtain a $(K,N;\frac{NZ}{f},\frac{S}{f})$ coded caching scheme from a $(K, f, Z, S)$ placement delivery array.

\begin{corollary}[Private schemes from PDAs]\label{co:pda}
If there exists a $(NK, f, Z, S)$ placement delivery array, we can obtain a private $(K, N; \frac{NZ}{f},\frac{S}{f})$ coded caching scheme, for any $N$.
\end{corollary}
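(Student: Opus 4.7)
The plan is to obtain the corollary as a direct composition of two results already available in the excerpt: the PDA-to-scheme construction from \cite{yan2017placement} described in the paragraph preceding the corollary, and the private-from-non-private construction of Theorem~\ref{th:existence}. The key observation is that the parameter $K$ in Theorem~\ref{th:existence} must be matched with the user count of the non-private scheme, which is $NK$, so a $(NK, f, Z, S)$ PDA (rather than a $(K,f,Z,S)$ PDA) is exactly what is needed to pass through the reduction.

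First I would invoke the stated PDA-to-scheme conversion with user count $NK$: given a $(NK, f, Z, S)$ placement delivery array, for any file count $N$ this produces a (non-private) coded caching scheme with $NK$ users, $N$ files, cache fraction $M = NZ/f$, and rate $R = S/f$. In other words, we obtain a $(NK, N; NZ/f, S/f)$ non-private scheme. At this stage I would simply note that the subpacketization of this non-private scheme is $f$, inherited from the PDA, and that no properties of the PDA beyond those guaranteed by \cite{yan2017placement} are needed.

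Next I would feed this non-private scheme into Theorem~\ref{th:existence} with the same parameters $N$ and $M, R$, but now setting the private-scheme user count to $K$ (so that the hypothesis ``there exists a $(KN, N; M, R)$ non-private scheme'' of the theorem is met by the scheme produced in the previous step). The theorem then guarantees a private $(K, N; NZ/f, S/f)$ coded caching scheme, which is exactly the claim of the corollary. Cache size and rate are preserved by Theorem~\ref{th:existence}, so the parameters match without any additional calculation.

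Since every piece is already in place, there is essentially no hard step; the only thing to be careful about is the bookkeeping of the user count through the two reductions, namely that the PDA must have $NK$ columns (not $K$) so that its associated non-private scheme has enough users for the partition into $K$ groups of size $N$ used in the construction of Theorem~\ref{th:existence}. I would close by remarking that the subpacketization of the resulting private scheme is again $f$, inherited unchanged from the PDA through both reductions, which is the property that makes the corollary useful for reducing subpacketization compared to the Maddah-Ali--Niesen-based construction.
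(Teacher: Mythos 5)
Your proposal is correct and is essentially identical to the paper's own proof: both convert the $(NK, f, Z, S)$ PDA into a non-private $(NK, N; NZ/f, S/f)$ scheme via \cite{yan2017placement} and then apply Theorem~\ref{th:existence} to obtain the private $(K, N; NZ/f, S/f)$ scheme. Your added remarks on matching the user count $NK$ and on the subpacketization $f$ being preserved are accurate bookkeeping that the paper leaves implicit.
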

\begin{proof}
Given a $(NK, f, Z, S)$ placement delivery array, there exists a non-private $(NK,N;\frac{NZ}{f},\frac{S}{f})$ (see \cite{yan2017placement} for details). From this we can obtain the private $(K,N;\frac{NZ}{f},\frac{S}{f})$ scheme using Theorem~\ref{th:existence}.
\end{proof}

We now present an example of a private scheme with $N=2$, $K=3$, derived from a PDA.
Consider the PDA from  \cite[Eq.~(7)]{yan2017placement} corresponding to 6 users and 4 subfiles.
\begin{align}
    P=\left[\begin{array}{llllll}{*} & {1} & {*} & {2} & {*} & {0} \\ {0} & {*} & {*} & {3} & {1} & {*} \\ {*} & {3} & {0} & {*} & {2} & {*} \\ {2} & {*} & {1} & {*} & {*} & {3}\end{array}\right]\label{eq:pda}
\end{align}
We assume that each file $W_i$ is split into 
$f$ subfiles which are denoted as $W_{i,j}$, where 
$0\leq j<f$.
In the non-private scheme, the cache contents of the $i$-th  user are given below. 

\begin{align*}
    Z_{0}' &=\left\{W_{i, 0}, W_{i, 2} : i \in[0,6)\right\} \\
    Z_{1}' &=\left\{W_{i, 1}, W_{i, 3} : i \in[0,6)\right\} \\
    Z_{2}' &=\left\{W_{i, 0}, W_{i, 1} : i \in[0,6)\right\} \\
    Z_{3}' &=\left\{W_{i, 2}, W_{i, 3} : i \in[0,6)\right\} \\
    Z_{4}' &=\left\{W_{i, 0}, W_{i, 3} : i \in[0,6)\right\} \\
    Z_{5}' &=\left\{W_{i, 1}, W_{i, 2} : i \in[0,6)\right\} 
\end{align*}
The transmission for demand vector $\boldsymbol{d}'=(d_0', \ldots , d_5')$ is 
\begin{eqnarray}
X^{d'}=\left \{\begin{array}{c}
W_{d'_0,1} \oplus W_{d'_2,2} \oplus W_{d'_5,0}\\ 
W_{d'_1,0} \oplus W_{d'_2,3} \oplus W_{d'_4,1}\\ 
W_{d'_0,3} \oplus W_{d'_3,0} \oplus W_{d'_4,2}\\ 
W_{d'_1,2} \oplus W_{d'_3,1} \oplus W_{d'_5,3}
\end{array}
\right\}.\label{eq:dp-26-pda-X-d}
\end{eqnarray}

For $N=2$ files, $A$ and $B$, we can create a private (3, 2; 1, 1) scheme as shown in Fig.~\ref{fig:schN2K3}.
 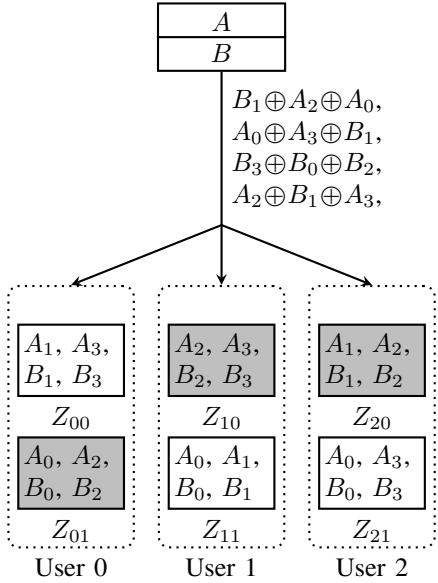
\begin{figure}[htb]
     \centering
     \begin{tikzpicture}[>=stealth, thick,
     every fit/.style={
		rounded corners,
		draw,
		inner ysep=0.5cm
	}
     ]
     \tikzstyle{block} = [rectangle, draw,thick,fill=blue!0,
    text centered, rounded corners, minimum height=1em]
        \node[draw, thick, align=center, inner xsep=20pt, minimum height=6mm] (s) {$A$\\$B$};
        \coordinate [below of=s,node distance=2.5cm] (b1) {};
        \node[draw, thick, align=center, fill=gray!50, inner xsep=2pt, minimum height=6mm, below of=b1,label=below:$Z_{10}$,node distance=1.8cm] (z2) {%
        \begin{varwidth}{4em}
            $A_2$, $A_3$, $B_2$, $B_3$
        \end{varwidth}
        };
        \node[draw, thick, align=center, inner xsep=2pt, minimum height=6mm, left of=z2,label=below:$Z_{00}$,node distance=2cm] (z0) {%
        \begin{varwidth}{4em}
            $A_1$, $A_3$, $B_1$, $B_3$
        \end{varwidth}
        };
        \node[draw, thick, align=center, fill=gray!50, inner xsep=2pt, minimum height=6mm, below of=z0,label=below:$Z_{01}$,node distance=1.5cm] (z1) {%
        \begin{varwidth}{4em}
            $A_0$, $A_2$, $B_0$, $B_2$
        \end{varwidth}
        };
        \node[draw, thick, align=center, inner xsep=2pt, minimum height=6mm, below of=z2,label=below:$Z_{11}$,node distance=1.5cm] (z3) {%
        \begin{varwidth}{4em}
            $A_0$, $A_1$, $B_0$, $B_1$
        \end{varwidth}
        };
        \node[draw, thick, align=center, fill=gray!50, inner xsep=2pt, minimum height=6mm, right of=z2,label=below:$Z_{20}$,node distance=2cm] (z4) {%
        \begin{varwidth}{4em}
            $A_1$, $A_2$, $B_1$, $B_2$
        \end{varwidth}
        };
        \node[draw, thick, align=center, inner xsep=2pt, minimum height=6mm, below of=z4,label=below:$Z_{21}$,node distance=1.5cm] (z5) {%
        \begin{varwidth}{4em}
            $A_0$, $A_3$, $B_0$, $B_3$
        \end{varwidth}
        };
        \path[thick] (s) edge node[text width=2cm, right] {
            $B_1 \oplus A_2 \oplus A_0$,
            $A_0 \oplus A_3 \oplus B_1$,
            $B_3 \oplus B_0 \oplus B_2$,
            $A_2 \oplus B_1 \oplus A_3$,
        } (b1);
    	\node[draw,dotted,fit=(z0) (z1),minimum width=1.5cm, label=below:User~0] (u1){} ;
    	\node[draw,dotted,fit=(z2) (z3),minimum width=1.5cm, label=below:User~1] (u2) {} ;
    	\node[draw,dotted,fit=(z4) (z5),minimum width=1.5cm, label=below:User 2] (u3) {} ;
        \draw[->,thick] (b1) -- (u1.north);
        \draw[->,thick] (b1) -- (u2.north);
        \draw[->,thick] (b1) -- (u3.north);
        \draw[-,thick] (s.west) -- (s.east);
     \end{tikzpicture}
     \caption{A $(3,2; 1, 1)$ private scheme for $D=(A, A, B)$ from a $(6,2; 1, 1) $ non-private scheme from the PDA given in \eqref{eq:pda}.}
     \label{fig:schN2K3}
 \end{figure}
 
\subsection{Case of two files, two users}
 For the $N=2$, $K=2$ case considered earlier, the $M=1$, $R=2/3$ construction presented in Section \ref{sec:sub3} is not derived from a non-private scheme but constructed directly. In fact, a construction from the Maddah-Ali-Niesen scheme using Theorem \ref{th:existence} results in a subpacketization of 6, when compared to the subpacketization of 3 needed for the scheme in Section \ref{sec:sub3}. This shows that direct construction has the benefits of improved subpacketization.
 
 \subsection{Partial privacy and reduction in subpacketization}
 The scheme modified from the non-private scheme can have less subpacketization if full privacy is not needed. For instance, suppose that 2-file privacy suffices. That is, at the end of the multicast transmission, every user has an ambiguity of one of two files about any other user's demand. 
 
For 2-file privacy, we need to provide only two options to populate the cache content of a user.  Hence, we can use a $(2N,K)$ non-private scheme to arrive at an $(N,K)$ partially private scheme where any user's demand is possibly one of two files to another user. These schemes are important particularly when we have large number of files compared to users. For example, if $N=10$ and $K=2$, then a fully private scheme modified from the non-private scheme would require the non-private scheme to have $K'=NK = 20$. With $M=5$, such a scheme would require a subpacketization $f = \binom{K'}{\frac{K'M}{N}} = \binom{20}{10} = 184756$. But under 2-file privacy for this setup, $K'=2 K=4$, and we can use a subpacketization as low as $f = \binom{4}{2} = 6$. In Fig.~\ref{fig:schN4K4partial}, we show a partially private $(2,4; 2, 2/3)$ scheme from a $(4,4; 2, 2/3) $ non-private scheme providing an ambiguity of two files.

 \begin{figure}[htb]
     \centering
     \begin{tikzpicture}[>=stealth, thick,
     every fit/.style={
		rounded corners,
		draw,
		inner ysep=0.5cm
	}
     ]
        \node (s) [rectangle split, rectangle split parts=4, minimum width=2cm,  draw, anchor=center] {A\nodepart{second}B\nodepart{third}C\nodepart{fourth}D} ;
        \coordinate [below of=s,node distance=3cm] (b1) {};
        \coordinate [below of=b1,node distance=2.2cm] (b2) {};
        \node[draw, thick, align=center, fill=gray!50, inner xsep=2pt, minimum height=6mm, left of=b2,label=below:$Z_{01}$,node distance=1.1cm] (z1) {%
        \begin{varwidth}{5em}
            $A_{01}$, $A_{12}$, $A_{13}$, $B_{01}$, $B_{12}$, $B_{13}$,
            $C_{01}$, $C_{12}$, $C_{13}$, $D_{01}$, $D_{12}$, $D_{13}$
        \end{varwidth}
        };
        \node[draw, thick, align=center, inner xsep=2pt, minimum height=6mm, left of=z1,label=below:$Z_{00}$,node distance=1.8cm] (z0) {%
        \begin{varwidth}{5em}
            $A_{01}$, $A_{02}$, $A_{03}$, $B_{01}$, $B_{02}$, $B_{03}$,
            $C_{01}$, $C_{02}$, $C_{03}$, $D_{01}$, $D_{02}$, $D_{03}$
        \end{varwidth}
        };
        \node[draw, thick, align=center, fill=gray!50, inner xsep=2pt, minimum height=6mm, right of=b2,label=below:$Z_{10}$,node distance=1.1cm] (z2) {%
        \begin{varwidth}{5em}
            $A_{02}$, $A_{12}$, $A_{23}$, $B_{02}$, $B_{12}$, $B_{23}$,
            $C_{02}$, $C_{12}$, $C_{23}$, $D_{02}$, $D_{12}$, $D_{23}$
        \end{varwidth}
        };
        \node[draw, thick, align=center, inner xsep=2pt, minimum height=6mm, right of=z2,label=below:$Z_{11}$,node distance=1.8cm] (z3) {%
        \begin{varwidth}{5em}
            $A_{03}$, $A_{13}$, $A_{23}$, $B_{03}$, $B_{13}$, $B_{23}$,
            $C_{03}$, $C_{13}$, $C_{23}$, $D_{03}$, $D_{13}$, $D_{23}$
        \end{varwidth}
        };
        \path[thick] (s) edge node[text width=3cm, right] {
        $D_{12} \oplus B_{02} \oplus D_{01}$, 
        $D_{13} \oplus B_{03} \oplus C_{01}$, 
        $D_{23} \oplus D_{03} \oplus C_{02}$, 
        $B_{23} \oplus D_{13} \oplus C_{12}$ 
        } (b1);
    	\node[draw,dotted,fit=(z0) (z1),minimum width=1.5cm, label=below:User~0] (u1){} ;
    	\node[draw,dotted,fit=(z2) (z3),minimum width=1.5cm, label=below:User~1] (u2) {} ;
        \draw[->,thick] (b1) -- (u1.north);
        \draw[->,thick] (b1) -- (u2.north);
        \draw[-,thick] (s.west) -- (s.east);
     \end{tikzpicture}
     \caption{A $(2,4; 2, 2/3)$ partially private scheme from a $(4,4; 2, 2/3) $ non-private scheme. The scheme has a privacy of two files. The gray boxes show the cache assigned by the server. The demands corresponding to unassigned caches for User $k$ are selected at random from the set $\{W_{j,j\in[N]}\}\setminus D_k$. Transmission shown is for the demand vector $D=\{B, D\}$ and the extended demand vector $D=\{D, B, D, C\}$. Observe that cache contents $Z_{00}$, $Z_{01}$, $Z_{10}$, $Z_{11}$ recover the files 
     $D$, $B$, $D$, $C$, respectively. From the point of view of the User~1, the User~0 could have requested either $D$ or $B$ giving the necessary privacy.}
     \label{fig:schN4K4partial}
 \end{figure}
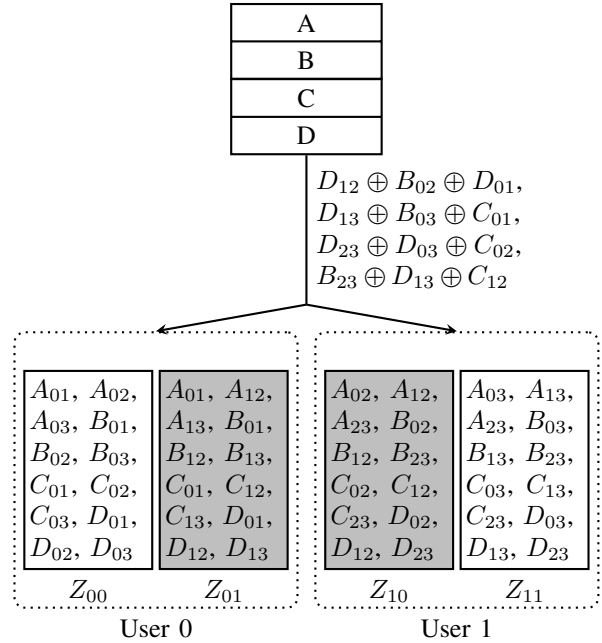

\section{Conclusion}\label{sec:conc}
We have investigated here the problem of demand privacy in systems employing coded caching techniques with a focus on minimizing subpacketization. For the 2-user, 2-file case, we provided a new construction with a subpacketization of 3. Additionally, we proved that the subpacketization of $3$ is indeed minimal for a linear code for the 2-user, 2-file case. Also, we proposed  partially private caching schemes and showed how to construct  such private schemes with less subpacketization in the general $K$-user, $N$-file case.

\appendices
\section{Impossibility results for $(2, 2) $ private linear  schemes with two subfiles}\label{sec:nogo}
A coded caching scheme is said to be linear if all the cache contents, transmissions
and the decoding involves only linear operations.
Here we provide a proof for Lemma~\ref{lem:nosub2} in Section~\ref{sec:22}  and show that there does not exist a private $
(2, 2; 1, 0.5 )$ linear coded caching scheme with subpacketization of two. The proof method is by contradiction. So, we begin by assuming the existence of a $(2,2;1,0.5)$ linear coded caching scheme with subpacketization of two.

\subsection{Notation and setup}
Suppose $A,  B$ are split into two subfiles each as $A_0, A_1$ and $B_0, B_1$, respectively. 
Let $S$ be defined as 
\begin{eqnarray} 
S =\left[\begin{array}{c} A_0\\ A_1\\ B_0\\ B_1 \end{array}\right].\label{eq:subfiles}
\end{eqnarray}
Let the $i$-th user's cache $Z_i$ and the transmission $X^{D_1D_2}$ be written as 
\begin{eqnarray}
Z_i &=& \mathsf{C}_i S, \label{eq:coeff-matrix}\\
X^{D_1D_2} &=& T^{D_1D_2} S, \label{eq:tx-matrix}
\end{eqnarray}
where $\mathsf{C}_i$ and $T^{D_1D_2}$ are $2\times 4$ and $1\times 4$ coefficient matrices, respectively, with entries from a suitable field. User~$i$ can decode $D_i$ given $T^{D_1D_2}$ using the cache $Z_i$.

The matrix $\mathsf{C}_i$ is split into $2\times 2$ matrices $\mathsf{C}_{iA}$ and $\mathsf{C}_{iB}$ as follows:
\begin{eqnarray}
\mathsf{C}_i& =& \left[ \begin{array}{cc} \mathsf{C}_{iA}& \mathsf{C}_{iB}\end{array} \right],
\end{eqnarray}
Similarly, $T^{D_1D_2}$ is split into two $1\times 2$ submatrices as shown below. 
\begin{eqnarray}
T^{D_1D_2} &=& \left[ \begin{array}{cc} T^{D_1D_2}_{A}& T^{D_1D_2}_{B}\end{array} \right].
\end{eqnarray}
We denote the rank of a matrix $ Q $ by $ \rk(Q) $.

We assume that 
	\begin{equation}
	\rk(\mathsf{C}_i) =2 
	\end{equation}
implying that each cache contains independent subfile combinations.

\subsection{Lemmas on structure of coefficient matrices}
\begin{lemma}[Rank constraints on coefficient matrices] \label{lm:ci-rank}
$ \rk(\mathsf{C}_{iA})= \rk(\mathsf{C}_{iB}) =1$.
\end{lemma}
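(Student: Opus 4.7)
My strategy is to prove the two rank equalities using only the decodability requirements for User~$i$; privacy plays no role in this particular lemma. Whenever $D_i = A$, file $A$ must be recoverable from User~$i$'s cache together with the transmission, and whenever $D_i = B$, file $B$ must be. For each demand vector $D$ I will work with the $3\times 4$ observation matrix $U = \bigl(\begin{smallmatrix}\mathsf{C}_i \\ T^{D}\end{smallmatrix}\bigr)$ and split its columns as $U_A$ (first two) and $U_B$ (last two). The key fact I will invoke is that, over a field, file $B$ is linearly decodable from $US$ if and only if $\rk(U_B) = 2$ \emph{and} $\mathrm{col}(U_A) \cap \mathrm{col}(U_B) = \{0\}$; by the dimension formula for subspaces this is equivalent to the clean identity $\rk(U) = \rk(U_A) + \rk(U_B)$. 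The analogous statement holds for $A$-decodability.

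For the lower bound $\rk(\mathsf{C}_{iA}) \ge 1$, pick any demand with $D_i = A$. Decodability of $A$ forces $\rk(U_A) = 2$. Since $U_A$ is $3\times 2$ with top $2\times 2$ block $\mathsf{C}_{iA}$ and bottom row $T_A^{D}$, the assumption $\mathsf{C}_{iA} = 0$ would leave at most one nonzero row and force $\rk(U_A) \le 1$, a contradiction. The symmetric argument with $D_i = B$ gives $\rk(\mathsf{C}_{iB}) \ge 1$.

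For the upper bound $\rk(\mathsf{C}_{iA}) \le 1$, pick a demand with $D_i = B$. Then $\rk(U_B) = 2$ and $\rk(U) = \rk(U_A) + \rk(U_B)$, so $\rk(U_A) = \rk(U) - 2$. Since $U$ has only three rows, $\rk(U) \le 3$, which yields $\rk(U_A) \le 1$. Because $\mathsf{C}_{iA}$ is a submatrix of $U_A$, we conclude $\rk(\mathsf{C}_{iA}) \le \rk(U_A) \le 1$. Swapping $A$ and $B$ gives $\rk(\mathsf{C}_{iB}) \le 1$, and combining the bounds completes the proof.

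The only subtle point is the decodability criterion in the first paragraph: it is tempting but wrong to settle for $\rk(U_B) = 2$ alone, and it is exactly the column-span disjointness that makes rate $R = 1/2$ with subpacketization $2$ so tight. This criterion itself follows from a short kernel argument—any two distinct pairs $(A,B) \ne (A',B')$ producing the same $US$ with $B \ne B'$ would force a nonzero element into $\mathrm{col}(U_A) \cap \mathrm{col}(U_B)$—so once it is in place the rank counting above is immediate.
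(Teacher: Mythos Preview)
Your proof is correct and follows essentially the same route as the paper: both arguments use only the linear decodability requirements (no privacy), stacking the cache and the single transmission row into a $3\times 4$ matrix and reading off rank constraints on the $A$- and $B$-column blocks. Your packaging via the clean identity $\rk(U)=\rk(U_A)+\rk(U_B)$ is a slightly tidier formulation of what the paper does with an explicit decoding matrix and a kernel argument, but the substance is identical.
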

\begin{proof}
We will show $\rk(\mathsf{C}_{0B})=1$. Consider the $3\times 4$ matrix 
\begin{equation}
    M_{0AB}=\begin{bmatrix}
    \mathsf{C}_{0A}&\mathsf{C}_{0B}\\
    T^{AB}_A   &  T^{AB}_B
    \end{bmatrix}.
\end{equation}
The cache of User~0 and transmission for the demand $AB$, when combined, result in the vector $M_{0AB} S$. Since User~0 can recover $A = [A_0\; A_1]$ by linearly combining the elements of $M_{0AB} S$, there exists a $2\times 3$ matrix $U$ such that 
\begin{equation}
    U\begin{bmatrix}
    \mathsf{C}_{0A}\\
    T^{AB}_A
    \end{bmatrix} = \begin{bmatrix}
    1&0\\
    0&1
    \end{bmatrix},
    \label{eq:U1}
\end{equation}
and
\begin{equation}
    U\begin{bmatrix}
    \mathsf{C}_{0B}\\
    T^{AB}_B
    \end{bmatrix} = \begin{bmatrix}
    0&0\\
    0&0
    \end{bmatrix},
    \label{eq:U2}
\end{equation}
resulting in the decoding of $A$ and elimination of $B_0$ and $B_1$. From Eq.~\eqref{eq:U1}, the rank of $U$ is 2. Using this in Eq.~\eqref{eq:U2}, $\rk(\mathsf{C}_{0B})\ne 2$. 

If $\mathsf{C}_{0B}$ is the all-zero matrix, then User~0 cannot recover $B$ using only the transmission $X^{BA}$. So, $\rk(\mathsf{C}_{0B})\ne 0$. 
This only leaves the possibility $\rk(\mathsf{C}_{0B})=1$. The proof above can be readily adapted to show $\rk(\mathsf{C}_{iA})=1$ for $i=0,1$ and $\rk(\mathsf{C}_{1B})=1$.
\end{proof}

A consequence of Lemma~\ref{lm:ci-rank}, none of the files are stored entirely on any cache.

For an invertible $2\times 2$ matrix $U$, the scheme obtained by replacing $\mathsf{C}_i$ by $U \mathsf{C}_i$ is also a demand-private coded caching scheme because one cache can be obtained from the other. This is captured in the following lemma for future use.
\begin{lemma}[Equivalent coefficient matrices]\label{lm:equivalent-Ci}
Cache $Z_i=C_iS$ can recover file $W$ from a transmission $X$ iff $Z_i'=U C_iS$ can recover the same file from $X$ for any invertible $2\times 2$ matrix $U$.
\end{lemma}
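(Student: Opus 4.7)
The plan is to exploit invertibility of $U$ symmetrically in both directions. The lemma is essentially the statement that multiplying a cache's coefficient matrix by an invertible matrix preserves the information content, since in a linear scheme "recovery" amounts to exhibiting a linear combination of cache entries and transmission entries that equals the target file.

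First I would make precise what "recover" means in the linear setting. Writing the transmission as $X = TS$, the claim that $Z_i = C_i S$ recovers a file $W = E S$ (where $E$ selects the appropriate rows from $S$) from $X$ means there exist matrices $V_1, V_2$ with
\begin{equation}
V_1 C_i S + V_2 T S = E S
\end{equation}
for all possible instantiations of the subfiles $S$, equivalently $V_1 C_i + V_2 T = E$. I would state this characterization explicitly at the start of the proof.

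Next, for the forward direction, assume such $V_1, V_2$ exist for $C_i$, and set $V_1' = V_1 U^{-1}$. Then
\begin{equation}
V_1' (U C_i) + V_2 T = V_1 U^{-1} U C_i + V_2 T = V_1 C_i + V_2 T = E,
\end{equation}
so $Z_i' = U C_i S$ together with $X$ recovers $W$ via the decoding matrices $(V_1', V_2)$. The converse is entirely symmetric: given decoders $(V_1', V_2)$ for the modified cache, set $V_1 = V_1' U$ and observe $V_1 C_i + V_2 T = V_1' U C_i + V_2 T = E$. Both directions use only that $U$ is invertible, so $U^{-1}$ exists.

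There is no real obstacle here; the only subtlety is being careful to quantify over all subfile realizations so that equality of $S$-dependent vectors is upgraded to equality of coefficient matrices. I would make one brief remark at the end noting that the same argument, applied row by row to the transmission, would also show invariance under left-multiplication of $T$ by invertible matrices, which explains why it suffices to work with coefficient matrices up to row-equivalence in the subsequent impossibility arguments.
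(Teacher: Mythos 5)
Your proposal is correct and rests on exactly the idea the paper uses, namely that an invertible $U$ makes $Z_i$ and $Z_i'=UC_iS$ mutually computable from one another; the paper merely asserts this in one sentence before stating the lemma, whereas you formalize it via the decoding identity $V_1C_i+V_2T=E$ and the substitutions $V_1\mapsto V_1U^{-1}$ and $V_1'\mapsto V_1'U$. No gaps; your explicit characterization of linear recovery is a faithful elaboration of the paper's argument rather than a different route.
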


\begin{corollary}[Reduced coefficient matrices]\label{co:cor-pure}
	Given the coefficient matrix $\mathsf{C}_i$,	
	there exist invertible matrices $U_i$ and $V_i$ such that 
	    \begin{eqnarray}
	V_iU_i \mathsf{C}_i = \left[\begin{array}{cccc} a&b&0&0 \\
	 0&0&c&d\end{array} \right],\label{eq:reduced-Ci}
	 	\end{eqnarray}
where both $(a,b)$ and $(c,d)$ are nonzero.
\end{corollary}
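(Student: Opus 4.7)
The plan is to build the two invertible matrices $U_i$ and $V_i$ sequentially: $U_i$ will put the $A$-block into a standard rank-1 form with its second row zero, and then $V_i$ will clean up the $B$-block of the first row. Throughout, I will use Lemma~\ref{lm:equivalent-Ci} implicitly: left-multiplication by invertibles yields an equivalent coefficient matrix, so all such reductions preserve the scheme.

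First, because $\rk(\mathsf{C}_{iA})=1$ by Lemma~\ref{lm:ci-rank}, standard row reduction provides an invertible $2\times 2$ matrix $U_i$ with
\begin{equation*}
U_i\mathsf{C}_{iA}=\begin{bmatrix}a & b\\ 0 & 0\end{bmatrix},\qquad (a,b)\neq(0,0).
\end{equation*}
Applying the same $U_i$ to $\mathsf{C}_i$ gives
\begin{equation*}
U_i\mathsf{C}_i=\begin{bmatrix}a & b & e & f\\ 0 & 0 & c & d\end{bmatrix}
\end{equation*}
for some scalars $c,d,e,f$. Since $U_i$ is invertible, $U_i\mathsf{C}_{iB}=\bigl[\begin{smallmatrix}e & f\\ c & d\end{smallmatrix}\bigr]$ still has rank $1$, so the rows $(e,f)$ and $(c,d)$ are linearly dependent.

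Next, I rule out $(c,d)=(0,0)$. If that held, the second row of $U_i\mathsf{C}_i$ would be entirely zero, forcing $\rk(\mathsf{C}_i)\le 1$ and contradicting the standing assumption $\rk(\mathsf{C}_i)=2$. Hence $(c,d)\neq(0,0)$, and the linear dependence of the two rows of $U_i\mathsf{C}_{iB}$ yields a scalar $\lambda$ with $(e,f)=\lambda(c,d)$. Set
\begin{equation*}
V_i=\begin{bmatrix}1 & -\lambda\\ 0 & 1\end{bmatrix},
\end{equation*}
which is invertible. Then
\begin{equation*}
V_iU_i\mathsf{C}_i=\begin{bmatrix}a & b & e-\lambda c & f-\lambda d\\ 0 & 0 & c & d\end{bmatrix}=\begin{bmatrix}a & b & 0 & 0\\ 0 & 0 & c & d\end{bmatrix},
\end{equation*}
with both $(a,b)$ and $(c,d)$ nonzero, which is the claimed form. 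The only subtle point is the case analysis that forces $(c,d)\neq 0$; once one observes that a zero second row would violate the overall rank hypothesis, the construction of $V_i$ is immediate from the rank-1 structure of $\mathsf{C}_{iB}$.
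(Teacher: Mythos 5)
Your proof is correct and follows essentially the same route as the paper's: use $\rk(\mathsf{C}_{iA})=1$ to zero out the second row of the $A$-block with an invertible $U_i$, invoke $\rk(\mathsf{C}_i)=2$ to conclude $(c,d)\neq(0,0)$, and then clear the first row of the $B$-block with an upper-triangular $V_i$. If anything, you are slightly more explicit than the paper in justifying why $(e,f)$ is a scalar multiple of $(c,d)$ (via $\rk(\mathsf{C}_{iB})=1$), a step the paper leaves implicit.
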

\begin{proof}
Suppose that $\mathsf{C}_i$ is written as follows.
	\begin{eqnarray}
	\mathsf{C}_i = \left[\begin{array}{cccc} a&b&*&* \\
	 a'&b'&*&*\end{array} \right].
	\end{eqnarray}
By Lemma~\ref{lm:ci-rank}, $\rk(\mathsf{C}_{iA})=1$. 
Hence, without loss of generality we can assume that  $(a, b)\neq (0,0)$, and $(a', b')$
is a scalar multiple of $(a, b)$.
    There exists some invertible matrix $U_i=\left[\begin{array}{cc}1 & 0 \\\alpha&1\end{array}\right]$ for some scalar $\alpha$ such that
    \begin{eqnarray}
	U_i\mathsf{C}_i = \left[\begin{array}{cccc} a&b&c'&d' \\
	 0&0&c&d\end{array} \right]
	\end{eqnarray}
	for some $c$, $d$, $c'$ and $d'$.
	Since $\rk(\mathsf{C}_i)=2$, it follows that $(c, d) \neq (0,0)$.
	Then for some $\beta$ and $V_i=\left[\begin{array}{cc}1 & \beta \\0&1\end{array}\right]$
	 we obtain Eq.~\eqref{eq:reduced-Ci}.
\end{proof}

An immediate consequence of Lemma~\ref{lm:equivalent-Ci} and Corollary~\ref{co:cor-pure}, we can
assume that the coefficient matrices are of the form 
given below.
\begin{eqnarray}
 \mathsf{C}_i=\left[\begin{array}{cccc} a_i&b_i&0&0 \\
	 0&0&c_i&d_i 
	 \end{array}\right].\label{eq:ci-std-form}
	 \end{eqnarray}
	 
\begin{lemma}[Constraints due to recovery]\label{lm:ci-t-rank}
	Given $ T^{D_0 D_1} $ and $\mathsf{C}_i$, we have the following constraints.
	\begin{subequations}   
	\begin{eqnarray}
	\rk\left(\begin{bmatrix}
			\mathsf{C}_{iD_i}\\[3pt]
			T_{D_i}^{D_0D_1}
			\end{bmatrix}\right)=2 \label{eq:ct-rank-2}\\
			\rk\left(\begin{bmatrix}
			\mathsf{C}_{i\overline{D_{i}}} \\[3pt]
			T_{\overline{D_i}}^{D_0D_1}
			\end{bmatrix}\right)= 1,\label{eq:ct-rank-1}
	\end{eqnarray}
	where $D_i \in \{ A, B\}$ and $\overline{D_i}=\{A,B\}\setminus D_i$ is the file that is not demanded by User~$i$.
	\end{subequations}
	
	\end{lemma}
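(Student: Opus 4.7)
The plan is to translate the successful linear decoding of $W_{D_i}$ by User~$i$ into the existence of a $2\times 3$ decoder matrix $U$, and then read off both rank constraints as consequences of $U$ being a left inverse on one block and annihilating the other.

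First, I would stack the cache and transmission coefficients, with columns permuted so that the two subfiles of the demanded file come first, into the $3\times 4$ composite
\begin{equation*}
M_i = \begin{bmatrix}\mathsf{C}_{iD_i} & \mathsf{C}_{i\overline{D_i}} \\ T_{D_i}^{D_0D_1} & T_{\overline{D_i}}^{D_0D_1}\end{bmatrix}.
\end{equation*}
Linear recovery of $W_{D_i}$ from $(Z_i, X^{D_0D_1}) = M_i S$ is equivalent to the existence of $U\in\mathbb{F}^{2\times 3}$ with $U M_i = \begin{bmatrix}I_2 & 0\end{bmatrix}$, which splits columnwise into the two block equations
\begin{equation*}
U\begin{bmatrix}\mathsf{C}_{iD_i}\\ T_{D_i}^{D_0D_1}\end{bmatrix}=I_2, \qquad U\begin{bmatrix}\mathsf{C}_{i\overline{D_i}}\\ T_{\overline{D_i}}^{D_0D_1}\end{bmatrix}=0.
\end{equation*}

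Eq.~\eqref{eq:ct-rank-2} then falls out immediately: the first block equation exhibits $U$ as a left inverse of a $3\times 2$ matrix, which therefore must have full column rank $2$. For Eq.~\eqref{eq:ct-rank-1}, the same left-inverse property forces $\rk(U)=2$, so $U$ viewed as a map $\mathbb{F}^3\to\mathbb{F}^2$ has a $1$-dimensional kernel. The second block equation forces both columns of the stacked matrix to lie in this kernel, which gives $\rk\le 1$. The matching lower bound $\rk\ge 1$ comes from Lemma~\ref{lm:ci-rank}, since $\rk(\mathsf{C}_{i\overline{D_i}})=1$ is already the rank of a submatrix obtained by deleting the bottom row.

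I do not expect a serious obstacle here; once $U$ is introduced, both parts follow from elementary rank-nullity. The subtlety worth flagging is that pinning the rank down to exactly $1$ rather than merely $\le 1$ requires invoking the already-proven Lemma~\ref{lm:ci-rank}: without that input, the null-space argument alone would allow the degenerate case where both $\mathsf{C}_{i\overline{D_i}}$ and $T_{\overline{D_i}}^{D_0D_1}$ vanish. The argument is symmetric in $D_i\in\{A,B\}$, so no case split is needed beyond the notation.
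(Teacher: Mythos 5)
Your proof is correct. It takes a mildly different route from the paper's own proof of this lemma: the paper first normalizes $\mathsf{C}_i$ to the reduced form of Eq.~\eqref{eq:ci-std-form} via Corollary~\ref{co:cor-pure}, and then argues concretely that the $2\times 2$ block $\bigl[\begin{smallmatrix} a_i & b_i \\ \multicolumn{2}{c}{T^{D_0D_1}_{D_i}}\end{smallmatrix}\bigr]$ must be invertible for recovery, while $T^{D_0D_1}_{\overline{D_i}}$ must be a scalar multiple of the nonzero row of $\mathsf{C}_{i\overline{D_i}}$ so that the undemanded subfiles can be eliminated. You instead reuse the decoder-matrix device that the paper deploys only in the proof of Lemma~\ref{lm:ci-rank} (its Eqs.~\eqref{eq:U1}--\eqref{eq:U2}): the existence of $U$ with $UM_i = [\,I_2 \;\; 0\,]$ gives \eqref{eq:ct-rank-2} from the left-inverse property and \eqref{eq:ct-rank-1} from rank-nullity plus the lower bound supplied by Lemma~\ref{lm:ci-rank}. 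Your version is more uniform (no dependence on the reduced form, both rank constraints from one decoding identity) and you correctly flag the one subtlety, namely that the exact value $1$ in \eqref{eq:ct-rank-1} rather than $\le 1$ needs Lemma~\ref{lm:ci-rank}; the paper's version makes the elimination step more visible, which it then reuses in the final contradiction argument. Both are sound.
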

	\begin{proof}

	    Consider the coefficient matrix $\mathsf{C}_i$ of User~$i$, has an equivalent form given in Eq.~\eqref{eq:ci-std-form}. 
	    Combining with $T^{D_0D_1}$
	 we have 
	 \begin{eqnarray}
	 \left[\begin{array}{c} \mathsf{C}_i \\ T^{D_0D_1} \end{array}\right]S =
	 \left[\begin{array}{cccc} a_i&b_i&0&0 \\
	 0&0&c_i&d_i \\
	 \multicolumn{2}{c}{T^{D_0D_1}_A}&\multicolumn{2}{c}{T^{D_0D_1}_B}\end{array}\right]\left[\begin{array}{c} A_0\\A_1\\B_0\\B_1\end{array}\right].\label{eq:cts-detailed}
	 \end{eqnarray}
	  From this system of equations, one can observe that $A_0$ and $A_1$ appear in two equations. 
	 If $D_i=A$, then User~$i$ must recover 
	  the subfiles $A_0$ and $A_1$, and the following condition must hold.
	 \begin{eqnarray}
	 \rk\left(\left[\begin{array}{cc} a_i&b_i \\
	 \multicolumn{2}{c}{T^{D_0D_1}_A}\end{array}\right]\right)=2\label{eq:ct-invertible1}
	 \end{eqnarray}
	 Similarly if $D_i=B$, the following condition must hold for User~$i$ to recover file $B$ from $T^{D_0D_1}$.
	 \begin{eqnarray}
	 \rk\left(\left[\begin{array}{cc} c_i&d_i \\
	 \multicolumn{2}{c}{T^{D_0D_1}_B}\end{array}\right]\right)=2.\label{eq:ct-invertible2}
	 \end{eqnarray}
	  Eq.~\eqref{eq:ct-rank-2} follows from Eq.~\eqref{eq:ct-invertible1} and Eq.~\eqref{eq:ct-invertible2}. 

One can see that the condition in Eq.~\eqref{eq:ct-invertible1} is not enough for recovering $D_i=A$ 
at User~$i$ using Eq.~\eqref{eq:cts-detailed}. We should be able to remove the part corresponding to $T^{D_0D_1}_{\overline{D_i}}= T^{D_0D_1}_B$ from the third row in Eq.~\eqref{eq:cts-detailed} to arrive at two equations in two variables $A_0$, $A_1$ and solve for them. So if $T^{D_0D_1}_B$ is nonzero, then it should be a scalar multiple of $(c_i,d_i)$. Since $(c_i,d_i)$ is nonzero from Lemma~\ref{lm:ci-rank}, we have
	  \begin{eqnarray}
	 \rk\left(\left[\begin{array}{cc} c_i&d_i \\
	 \multicolumn{2}{c}{T^{D_0D_1}_B}\end{array}\right]\right)=1
	 \end{eqnarray}
	 Hence
 	\begin{eqnarray}
	    \rk\left(\begin{bmatrix}
			\mathsf{C}_{0B} \\
			T_{B}^{AD_1}
		\end{bmatrix}\right)= 1 \mbox{ and }
		\rk\left(\begin{bmatrix}
			\mathsf{C}_{1B} \\
			T_{B}^{D_0A}
		\end{bmatrix}\right)= 1\label{eq:ct-rank-1a}.
	\end{eqnarray}
Similarly solving for $B_0$ and $B_1$ (i.e. $D_i=B$) at User~$i$ requires
	  \begin{eqnarray}
	 \rk\left(\left[\begin{array}{cc} a_i&b_i \\
	 \multicolumn{2}{c}{T^{D_0D_1}_A}\end{array}\right]\right)=1
	 \end{eqnarray}
	 Hence
 	\begin{eqnarray}
	    \rk\left(\begin{bmatrix}
			\mathsf{C}_{0A} \\
			T_{A}^{BD_1}
		\end{bmatrix}\right)= 1 \mbox{ and }
		\rk\left(\begin{bmatrix}
			\mathsf{C}_{1A} \\
			T_{A}^{D_0B}
		\end{bmatrix}\right)= 1\label{eq:ct-rank-1b}.
	\end{eqnarray}
	Eq.~\eqref{eq:ct-rank-1a} and Eq.~\eqref{eq:ct-rank-1b} immediately imply Eq.~\eqref{eq:ct-rank-1}.
	\end{proof}
	
So far, we have not used the requirement of demand privacy. The following lemma uses the demand privacy condition to derive an important constraint on the transmission.
\begin{lemma}[Constraints on transmission]
	\label{lm:tr}
If $X^{D_0D_1}=T^{D_0D_1}S$ where $S$ is defined as in Eq.~\eqref{eq:subfiles}, then $T^{AA}_A$ and $T^{AA}_B$ are both nonzero.
\end{lemma}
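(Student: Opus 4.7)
My plan is to establish the two non-triviality assertions on $T^{AA}$ separately: the first is a direct consequence of the rank constraints already developed in Lemma~\ref{lm:ci-rank} and Lemma~\ref{lm:ci-t-rank}, while the second is where the demand privacy condition enters the argument for the first time.

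For the first assertion, $T^{AA}_A \ne 0$, I would apply Eq.~\eqref{eq:ct-rank-2} of Lemma~\ref{lm:ci-t-rank} at User~0 with $D_0=D_1=A$, which gives
\begin{equation*}
\rk\left(\begin{bmatrix}\mathsf{C}_{0A}\\ T^{AA}_A\end{bmatrix}\right) = 2.
\end{equation*}
Since Lemma~\ref{lm:ci-rank} guarantees $\rk(\mathsf{C}_{0A}) = 1$, the row $T^{AA}_A$ must properly extend the row space of $\mathsf{C}_{0A}$, and in particular it cannot be the zero row.

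For the second assertion, $T^{AA}_B \ne 0$, I would argue by contradiction using the demand privacy condition Eq.~\eqref{eq:dpcond}. Suppose $T^{AA}_B = 0$, so that $X^{AA}$ is supported entirely on $\{A_0,A_1\}$. Consider User~1 with cache $\mathsf{C}_1$ and demand $D_1=A$: depending on whether $D_0=A$ or $D_0=B$, she observes $X^{AA}$ or $X^{BA}$. Privacy requires $D_0$ to be independent of User~1's view, which in the deterministic linear setting forces these two transmission vectors to coincide, giving $T^{AA}_B = T^{BA}_B$. However, Lemma~\ref{lm:ci-t-rank} applied at User~0 with $D_0=B,\ D_1=A$ gives
\begin{equation*}
\rk\left(\begin{bmatrix}\mathsf{C}_{0B}\\ T^{BA}_B\end{bmatrix}\right) = 2,
\end{equation*}
which combined with $\rk(\mathsf{C}_{0B})=1$ forces $T^{BA}_B \ne 0$, contradicting our assumption and completing the proof.

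The step I expect to be the main obstacle is formalizing the implication ``privacy implies $X^{AA} = X^{BA}$'' in a linear scheme. In a strictly deterministic scheme this is immediate: any difference between $T^{AA}$ and $T^{BA}$ would be directly visible in the transmission User~1 observes, letting her distinguish $D_0=A$ from $D_0=B$ and violating Eq.~\eqref{eq:dpcond}. If the scheme additionally permits randomized cache assignments, the conclusion must be phrased as equality of conditional distributions of the transmission given $(\mathsf{C}_1,D_1)$; a robust way to extract the pointwise consequence is to observe that, since $X^{BA}$ must contain nonzero $B$-support for User~0 to decode $B$, any $X^{AA}$ realization that is indistinguishable from an $X^{BA}$ realization must likewise carry nonzero $B$-support, forcing $T^{AA}_B \ne 0$.
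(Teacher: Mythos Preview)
Your proof is correct. For $T^{AA}_A\ne 0$ you invoke Lemma~\ref{lm:ci-t-rank} and Lemma~\ref{lm:ci-rank} explicitly where the paper simply says ``otherwise User~0 cannot recover $A$''; these are the same argument.

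For $T^{AA}_B\ne 0$ your route differs from the paper's. You argue that privacy (with the header revealing the coefficient row) forces the transmission under $(D_0,D_1)=(A,A)$ to agree, at least in distribution, with that under $(B,A)$, and then inherit $T^{BA}_B\ne 0$ from the decoding constraint at User~0. The paper instead argues directly from decodability: since $\rk(\mathsf{C}_{iB})=1$ for \emph{every} admissible cache (Lemma~\ref{lm:ci-rank}), a transmission with $T_B=0$ cannot allow any user to recover $B$, so a user observing such a header immediately infers that both demands equal $A$, violating privacy outright. The paper's version is shorter and completely sidesteps the deterministic-versus-randomized issue you flag as the main obstacle, because it never needs to match $X^{AA}$ against $X^{BA}$; it only needs that a $B$-free header is globally incompatible with anyone requesting $B$. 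Your final-paragraph patch for randomized placements is valid and, once unpacked, is precisely this observation.
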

\begin{proof}

	 If $T_A^{AA}$ is zero, then User~0 cannot recover file $A$. So, $T^{AA}_A$ is nonzero. 
    We know that the entire file $B$ is not stored on any cache. 
    If $T_B^{AA}$ is zero, then every user must be demanding only $A$.
    This reveals the demands of all the users, 
    so $T_{B}^{AA}$ must be nonzero.

\end{proof}
Note that Lemma~\ref{lm:tr} is only a necessary condition for demand privacy.

\subsection{Proof of Lemma~\ref{lem:nosub2}}
	Let the coefficient matrix $T^{AA}=(u,v,w,x)$. From Lemma \ref{lm:tr}, $(u, v)\neq 0 $ and $(w, x) \neq 0$.
	\begin{eqnarray}
	 \left[\begin{array}{c} \mathsf{C}_i \\ T^{AA} \end{array}\right]S=\left[\begin{array}{cccc} a_i&b_i&0&0 \\
	 0&0&c_i&d_i \\
	 u&v&w&x\end{array}\right]\left[\begin{array}{c} A_0 \\
	 A_1 \\
	 B_0 \\
	 B_1\end{array}\right]
	 \end{eqnarray}
	By Eq.~\eqref{eq:ct-rank-1}, we have
	 \begin{align*}
	     \rk\left(\left[
	     \begin{array}{cc} c_i&d_i \\
			w&x \end{array}
	     \right]\right)=1
	 \end{align*}

	Both  $(c_i,d_i)$ and $(w,x)$ are nonzero due to Lemma~\ref{lm:ci-rank} and Lemma~\ref{lm:tr} respectively. Thus, both $(w,x)$ and $(c_i,d_i)$ are scalar multiples of each other for $i=0,1$. 
	This implies that $(c_0,d_0)$ is a scalar multiple of $(c_1,d_1)$. 
	Then $\rk\left(\left[\begin{array}{c c}
	c_i&d_i\\
	\multicolumn{2}{c}{T^{AB}_B}
	\end{array}\right]\right)$ is same for $i=0, 1$. 
	However, this contradicts  Lemma \ref{lm:ci-t-rank}, by which 	$$
	\rk\left(\left[\begin{array}{c c}
	c_0&d_0\\
	\multicolumn{2}{c}{T^{AB}_B}
	\end{array}\right]\right) = 1 \text{ and }	\rk\left(\left[\begin{array}{c c}
	c_1&d_1\\
	\multicolumn{2}{c}{T^{AB}_B}
	\end{array}\right]\right) = 2.
	$$
	This shows the in-feasibility of coefficient matrices satisfying the rank constraints due to recovery and demand privacy.

	Therefore, we conclude that a linear private $(2,2;1,1/2)$ coded caching scheme with subpacketization of two does not exist.

\section{Impossibility results for uncoded prefetching with three subfiles}\label{sec:no-uncoded}
In Appendix~\ref{sec:nogo}, we have seen that with two subfiles we cannot obtain a private $(2,2;1,1/2)$ scheme. Here we show that without coded prefetching we cannot obtain a private $(2,2;1,2/3)$ scheme with three subfiles and thereby prove Lemma~\ref{lem:nosub3noncoded}.

Informally, the proof is organized as follows. 
First, we show that 
without coded prefetching the subfiles must be cached in an uncoded form i.e. 
without linear combinations. 
This restricts the possibilities for the caches.
Furthermore any given cache restricts the possibilities for the other user's cache. 
Demand privacy is possible only if the set of caches consistent with a user 
allow the reconstruction of both the files for any demand. 
We show that is not possible and hence a linear private $(2,2;1,2/3)$ scheme with subpacketization of three subfiles does not exist.

\subsection{Permissible caches without coded prefetching}
Without coded prefetching, the subfiles can only be replicated in the cache.
With three subfiles, $M=1$ implies that each user can store 3  subfiles. $R=2/3$ implies that there are two independent subfile combinations in the transmission. If all the subfiles in a cache belongs to a file, that user cannot recover the other file from a transmission of rate $R=2/3$. So a cache should contain two subfiles of one file and one subfile of the other file.  Let the two files be A and B. Without loss of generality, let us assume the cache of first user, $Z_0$ contains two subfiles of file A and one subfile of B. 
\begin{eqnarray}
 Z_0=\{A_0, A_1, B_2\}.
\end{eqnarray}
Let the cache of User~1 be
\begin{eqnarray}
Z_1=\{G_0, G_1, G_2\},
\end{eqnarray}
where $G_i\in \{A_0, A_{1}, A_{2}, B_{0}, B_{1}, B_{2}\}$.

\begin{lemma}\label{lm:f3-otherSubfile}
    If $Z_0=\{A_0, A_1, B_2 \}$, then the permissible cache for $Z_1$
    must be one of the following. 
\begin{subequations}
\begin{eqnarray}
 Z_1=\left\{G_0, G_1 , A_2 \mid 
G_0, G_1\in \{B_0, B_1, B_2 \}\right\}\label{eq:f3-bothB}\\
\mbox{ or } Z_1=\left\{G_0, G_1 , A_2 \bigg|  \begin{array}{l}G_0\in \{A_0, A_1 \}\\   
G_1\in \{B_0, B_1, B_2 \} \end{array}\right\}.\label{eq:f3-oneAoneB}
\end{eqnarray}.
\end{subequations}
\end{lemma}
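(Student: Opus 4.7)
The plan is to combine a constraint on the transmission coming from User~0's recovery requirement with User~1's decoding requirement for a specific demand. Without coded prefetching each cache stores three distinct uncoded subfiles from $\{A_0,A_1,A_2,B_0,B_1,B_2\}$, and since $R=2/3$ the transmission carries two linear combinations of these six subfiles. I first note that $Z_1$ cannot consist entirely of $A$-subfiles nor entirely of $B$-subfiles: if, say, $Z_1=\{B_0,B_1,B_2\}$, then when $D_1=A$ the two transmitted combinations, after User~1 subtracts its cached $B$-parts, reduce to at most two linear combinations in the three unknowns $A_0,A_1,A_2$, whose rank cannot exceed two, and $A$ cannot be recovered.

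The key step is to show that for every demand vector of the form $(B,D_1)$ the transmission $X^{B,D_1}$ must have no $A_2$-component. Writing each of its two parts as $\alpha_i A_0+\beta_i A_1+\gamma_i A_2+\delta_i B_0+\epsilon_i B_1+\zeta_i B_2$ for $i=1,2$, User~0 subtracts off the cached subfiles $A_0,A_1,B_2$ and is left with the effective combinations $\gamma_i A_2+\delta_i B_0+\epsilon_i B_1$. Since $D_0=B$, User~0 must recover $B_0$ and $B_1$ uniquely from these two combinations. Interpreting them as a linear map $L:\mathbb{F}^3\to\mathbb{F}^2$ on $(A_2,B_0,B_1)$, the requirement that $B_0$ and $B_1$ be well-defined independent of the unknown $A_2$ amounts to $\ker L$ being contained in the $A_2$-axis, which together with $\dim\ker L\ge 1$ pins $\ker L$ to be exactly the $A_2$-axis and forces $\gamma_1=\gamma_2=0$.

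I then specialise to the demand vector $(B,A)$. User~1 must recover $A_0,A_1,A_2$ from its cache together with $X^{BA}$, but the previous step shows $X^{BA}$ contains no information at all about $A_2$. Hence $A_2$ can only be recovered if it is already present in $Z_1$, giving the central conclusion $A_2\in Z_1$.

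Finally, writing $Z_1=\{A_2,G_0,G_1\}$ with $G_0,G_1$ distinct subfiles drawn from $\{A_0,A_1,B_0,B_1,B_2\}$, the possibility $\{G_0,G_1\}=\{A_0,A_1\}$ is excluded by the opening observation, since then $Z_1$ would be purely $A$-subfiles. The only remaining options are that both $G_0,G_1$ lie in $\{B_0,B_1,B_2\}$, giving the first listed form, or that exactly one of them lies in $\{A_0,A_1\}$ and the other in $\{B_0,B_1,B_2\}$, giving the second. I expect the main obstacle to be cleanly formalising the linear-algebra step that forces $\gamma_1=\gamma_2=0$, which crucially exploits the fact that $A_2$ appears nowhere in User~0's cache and therefore cannot be algebraically cancelled out of the two transmitted combinations.
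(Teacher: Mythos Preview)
Your proof is correct and follows essentially the same route as the paper: analyze the transmission for demand $(B,A)$, observe that User~0's need to recover $B_0,B_1$ while lacking $A_2$ forces the $A_2$-coefficients in $X^{BA}$ to vanish, conclude that $A_2$ must already sit in $Z_1$, and then rule out the all-$A$ possibility for the remaining two slots. Your kernel formulation of the vanishing step is a slightly more abstract packaging of the paper's direct rank argument, and your preliminary exclusion of an all-$B$ cache is handled in the paper just before the lemma rather than inside it, but the logic is the same.
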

\begin{proof}
Consider the transmission
\begin{align}
    &X^{BA}=\left[\begin{array}{c} u\\v \end{array}\right]=\label{eq:f3-XBA}\\
    & \left[\begin{array}{c}\alpha_0 A_0 + \alpha_1 A_1+ \alpha_2 A_2+\beta_0 B_0+\beta_1 B_1+\beta_2 B_2 \\
    \gamma_0 A_0+\gamma_1 A_1+\gamma_2 A_2+\delta_0 B_0+\delta_1 B_1+\delta_2 B_2\end{array}\right]\notag
\end{align} 

User~0 can use its cache contents to eliminate three variables from the system of linear equations in Eq.~\eqref{eq:f3-XBA}. The reduced/equivalent equations for User~0 is 
\begin{align*}
    \left[\begin{array}{c} u'\\v' \end{array}\right]=& \left[\begin{array}{c}\alpha_2 A_2+\beta_0 B_0+\beta_1 B_1 \\
    \gamma_2 A_2+\delta_0 B_0+\delta_1 B_1\end{array}\right]
\end{align*}
Since User~0 does not have access to $A_2$, for recovering $B_0$ and $ B_1$ we need
\begin{align}
    \rk\left(\left[\begin{array}{cc}\beta_0 & \beta_1 \\
    \delta_0 & \delta_1 \end{array}\right]\right) =2 \mbox{ and }\left[\begin{array}{c}\alpha_2  \\
    \gamma_2\end{array}\right] = \left[\begin{array}{c}0  \\ 0 \end{array}\right]\label{eq:f3-xba-noA2-B01rank2} 
\end{align}
The transmission $X^{BA}$ cannot involve $A_2$. So $A_2$ must be in $Z_1$ for it to recover file $A$ from $X^{BA}$. 
\[G_2=A_2\]
All the subfiles in $Z_1$ cannot be that of file $A$. So, we have two cases for the possible values of $\{G_0, G_1\}$ based on the associated files. Either both of them are subfiles of $B$ as in Eq.~\eqref{eq:f3-bothB} or one of them is subfile of $A$ and the other is of $B$ as in Eq.\eqref{eq:f3-oneAoneB}.
\end{proof}

\subsection{Two subfiles of file $B$ in $Z_1$}
In this section, we will show that if the cache of User~1 is of the form given in Eq.~\eqref{eq:f3-bothB},
then the scheme is not private. 
\begin{lemma}\label{lm:f3-no2B}
    If $Z_0=\{A_0, A_1, B_2 \}$, and 
    $Z_1=\{G_0, G_1,  A_2 \}$ and $G_i\in \{B_0, B_1, B_2 \}$,
    then demand privacy is not satisfied. 
\end{lemma}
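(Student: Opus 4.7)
The approach is a proof by contradiction via an algebraic fingerprint on the transmissions. Suppose we have a linear $(2,2;1,2/3)$ demand-private scheme with uncoded prefetching and caches as in the hypothesis.

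First I would adapt the rank arguments from the proof of Lemma~\ref{lm:ci-t-rank} to this three-subfile setting. Writing each of the two rows of $X^{D_0 D_1}$ as a coefficient vector in $(A_0, A_1, A_2, B_0, B_1, B_2)$ and reducing modulo each user's cache, the decoding requirements give the following structural fingerprint: when $D_0 = B$, both rows of $X$ must have zero $A_2$-coefficient (otherwise the two-dimensional row space restricted to the residual coordinates $(A_2, B_0, B_1)$ cannot contain both $(0,1,0)$ and $(0,0,1)$); when $D_0 = A$, the $(B_0, B_1)$-block has rank at most $1$ and the $A_2$-coefficient is nonzero in at least one row so that $A_2$ can be isolated at User~$0$; and when $D_1 = A$, the coefficient of the $B$-subfile missing from $Z_1$ vanishes and the $(A_0, A_1)$-block has rank $2$. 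These properties distinguish $X^{AA}_{Z_0, Z_1}$ from the admissible $X^{BA}_{\cdot, Z_1}$ transmissions.

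Next I would use the privacy condition~(\ref{eq:dpcond}) at User~$1$ with $D_1 = A$ to require that the multiset $\{X^{AA}_{Z, Z_1} : Z \in \mathcal{Z}_0\}$ of transmissions equal the multiset $\{X^{BA}_{Z, Z_1} : Z \in \mathcal{Z}_0\}$, where $\mathcal{Z}_0$ denotes the support of the server's placement at User~$0$. In particular, $X^{AA}_{Z_0, Z_1}$ must equal $X^{BA}_{Z_0^{\ast}, Z_1}$ for some $Z_0^{\ast} \in \mathcal{Z}_0$. Recalling from the preamble of this appendix that every admissible cache holds two subfiles of one file and one of the other, I would enumerate the candidate shapes of $Z_0^{\ast}$ and rule out each: shapes of type $\{A_0, A_1, B_k\}$ force the $A_2$-coefficient of $X^{BA}_{Z_0^{\ast}, Z_1}$ to vanish, violating the fingerprint; shapes containing $A_2$ conflict with User~$1$'s rank-$2$ condition on $(A_0, A_1)$ once User~$0$'s own rank requirement under $D_0 = B$ is imposed; and the remaining candidates $\{A_i, B_j, B_k\}$ with $i \in \{0, 1\}$ and $A_2 \notin Z_0^{\ast}$, which can pass this first pairing, fail the complementary multiset equality $X^{AA}_{Z_0^{\ast}, Z_1} = X^{BA}_{\cdot, Z_1}$, where the $(B_0, B_1)$-block rank collides with the vanishing coefficient of the $B$-subfile missing from $Z_1$. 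A relabeling symmetry among $B_0, B_1, B_2$ reduces the three possible shapes of $Z_1 = \{A_2, G_0, G_1\}$ to a single representative calculation.

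The main obstacle will be keeping the case analysis over $Z_0^{\ast}$ both exhaustive and concise, since each candidate shape requires a small $2 \times 3$ rank check on submatrices of the coefficient matrix. I expect the dichotomy of whether $A_2 \in Z_0^{\ast}$, together with the symmetry between the two $B$-subfiles absent from $Z_0$, to collapse the bookkeeping into essentially two representative sub-calculations.
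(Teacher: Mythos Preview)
Your route differs from the paper's in two ways, and both create extra work that the paper avoids.

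First, the paper does not treat the three shapes of $Z_1$ as a symmetric family at all: it \emph{pins down} $Z_1=\{A_2,B_0,B_1\}$ before invoking privacy. The rank condition carried over from the proof of Lemma~\ref{lm:f3-otherSubfile} (namely that the $(B_0,B_1)$-block of $X^{BA}$ has rank~$2$) forces the $B$-subfile missing from $Z_1$ to be $B_2$, because User~$1$ must zero out exactly that missing coefficient to recover $A_0,A_1$. Your claimed relabeling symmetry among $B_0,B_1,B_2$ is not available here, since $Z_0=\{A_0,A_1,B_2\}$ is fixed in the hypothesis; only the swap $B_0\leftrightarrow B_1$ preserves $Z_0$, so at best you get two cases, not one.

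Second, and more seriously, the paper argues privacy from User~$0$'s side, varying the cache $Z_1'$ of User~$1$, whereas you argue from User~$1$'s side, varying $Z_0^\ast$. The paper's direction closes in one step: from $X^{AA}$ one reads off that the $B_2$-coefficient vanishes and the $(A_0,A_1)$-block has rank~$2$; any $Z_1'$ recovering $B$ must then contain both $B_2$ and (by Lemma~\ref{lm:f3-otherSubfile}) $A_2$, leaving only one free slot, which cannot supply both $B_0$ and $B_1$. Your direction does \emph{not} close in one step. In your case~(c), $Z_0^\ast=\{A_i,B_j,B_k\}$ with $i\in\{0,1\}$ genuinely can satisfy $X^{AA}_{Z_0,Z_1}=X^{BA}_{Z_0^\ast,Z_1}$: take for instance $Z_1=\{A_2,B_0,B_1\}$ and the two rows $A_0+A_2$, $A_1+B_0$, which serve $(Z_0,Z_1)$ under $D=(A,A)$ and also serve $(\{A_1,B_1,B_2\},Z_1)$ under $D=(B,A)$. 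So you are forced into the recursive ``complementary multiset'' step, and your sketch of why that terminates (``the $(B_0,B_1)$-block rank collides with the vanishing coefficient'') does not match the actual constraints on $X^{AA}_{Z_0^\ast,Z_1}$, which involve the $(A_{1-i},A_2)$-block rather than a $(B_0,B_1)$-block. The recursion can in fact cycle among caches of type~(c), so as written your argument has a gap. Switching to the paper's viewpoint (fix $Z_1$, then vary $Z_1'$) eliminates both the spurious symmetry step and the recursion.
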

\begin{proof}
Let $G_0, G_1 \in \{B_0, B_1, B_2\}$.
The reduced equations corresponding to $X^{BA}$ for User~1 will be 
\begin{align}
    \left[\begin{array}{c} u''\\v'' \end{array}\right]= \left[\begin{array}{c}\alpha_0 A_0 + \alpha_1 A_1+ \beta_0 B_0+\beta_1 B_1+\beta_2 B_2 \\
    \gamma_0 A_0+\gamma_1 A_1+\delta_0 B_0+\delta_1 B_1+\delta_2 B_2\end{array}\right]
\end{align}

For User~1 being able to obtain subfiles $A_0$ and $A_1$ from the transmission, we need
\begin{subequations}
\begin{align}
    \rk\left(\left[\begin{array}{cc}\alpha_0 & \alpha_1 \\
    \gamma_0 & \gamma_1 \end{array}\right]\right) &=2, \mbox{ and }\\
    \left[\begin{array}{c}\beta_2  \\
    \delta_2\end{array}\right] &= \left[\begin{array}{c}0  \\ 
    0 \end{array}\right]
\end{align}
\end{subequations}
Due to Eq.~\eqref{eq:f3-xba-noA2-B01rank2}, $Z_2$ must contain the subfiles $B_0$ and $B_1$, for eliminating those variables from $X^{BA}$ and recover the subfiles of $A$.  
\begin{align}
    \{G_0, G_1\} &= \{B_{0}, B_{1}\}
\end{align}

Now consider the transmission for $D=(A,A)$. 
\begin{align}
    &X^{AA}=\left[\begin{array}{c} u\\v \end{array}\right]=\\& \left[\begin{array}{c}\alpha_0' A_0 + \alpha_1' A_1+ \alpha_2' A_2+\beta_0' B_0+\beta_1' B_1+\beta_2' B_2 \\
    \gamma_0' A_0+\gamma_1' A_1+\gamma_2' A_2+\delta_0' B_0+\delta_1' B_1+\delta_2' B_2\end{array}\right]\notag
\end{align}
For User~1, reduced equations are
\begin{align}
    \left[\begin{array}{c} u''\\v'' \end{array}\right]=\left[\begin{array}{c}\alpha_0' A_0 + \alpha_1' A_1+\beta_2' B_2 \\
    \gamma_0' A_0+\gamma_1' A_1+\delta_2' B_2\end{array}\right]
\end{align}
For User~1 recovering $A_0$ and $A_1$, it requires
\begin{subequations}
\begin{align}
    \rk\left(\left[\begin{array}{cc}\alpha_0' & \alpha_1' \\
    \gamma_0' & \gamma_1' \end{array}\right]\right) &=2\label{eq:f3-A01rank2} \mbox{ and }\\
    \left[\begin{array}{c}\beta_2'  \\
    \delta_2'\end{array}\right] &= \left[\begin{array}{c}0  \\ 0 \end{array}\right]
\end{align}
\end{subequations}
For demand privacy we require the existence of some cache $Z_1'$ which can recover file $B$ from $X^{AA}$. Since $X^{AA}$ doesn't involve $B_2$, it must be present in $Z'$.
\[Z_1'=\{H_0, B_2, A_2\}\]
For no value of $H_0 \in \{B_0, B_1, A_0, A_1\}$, it can recover both $B_0$ and $B_1$ (or file B completely) from $X^{AA}$ due to Eq.~\eqref{eq:f3-A01rank2}. Thus, if $Z_0$ has two subfiles of $A$, $Z_1$ cannot contain two subfiles of B as given in Eq.~\eqref{eq:f3-bothB}.
\end{proof}

\subsection{Two subfiles of file $A$ in $Z_1$}
If the cache of User~1 is of the form given in Eq.~\eqref{eq:f3-oneAoneB}, then we can restrict the 
cache even further as the following lemma shows. 
\begin{lemma}\label{lm:z1-no-b2}
     If $Z_0=\{A_0, A_1, B_2 \}$, then the permissible cache for $Z_1$
    must be of the form $Z_1=\{G_0, G_1, A_2 \}$, where $G_0, G_1$
    are distinct and $G_0 \in \{A_0, A_1\}$ and $G_1\in \{ B_0, B_1 \}$.
\end{lemma}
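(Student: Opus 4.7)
The plan is to proceed by contradiction using only the recovery requirements on a single transmission. Lemma~\ref{lm:f3-otherSubfile} together with Lemma~\ref{lm:f3-no2B} already forces $Z_1=\{G_0,G_1,A_2\}$ with $G_0\in\{A_0,A_1\}$ and $G_1\in\{B_0,B_1,B_2\}$, so all that remains is to rule out $G_1=B_2$ (the distinctness of $G_0,G_1$ is then automatic because they come from different files). Suppose, for contradiction, that $Z_1=\{G_0,B_2,A_2\}$, and let $\widetilde{A}$ denote the unique element of $\{A_0,A_1\}\setminus\{G_0\}$.

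Consider the transmission $X^{AB}$, viewed as two linear combinations of the six subfiles $A_0,A_1,A_2,B_0,B_1,B_2$. When User~1 subtracts off its cache contents $\{G_0,B_2,A_2\}$, the residual system in the two transmitted equations involves only the three unknowns $\widetilde{A},B_0,B_1$. Since User~1 must recover $B_0$ and $B_1$ without any access to $\widetilde{A}$, a standard nullspace argument in the spirit of Lemma~\ref{lm:ci-t-rank} forces the coefficient vector of $\widetilde{A}$ in $X^{AB}$ to be zero and the $2\times 2$ submatrix of $(B_0,B_1)$ coefficients in $X^{AB}$ to have rank exactly $2$.

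I would then perform the analogous reduction for User~0 with cache $\{A_0,A_1,B_2\}$. After User~0 subtracts $A_0,A_1,B_2$, the residual involves only $A_2,B_0,B_1$. Since User~0 must isolate $A_2$ from two equations in these three unknowns without any access to $B_0$ or $B_1$, the nullspace argument now runs in the opposite direction: the $2\times 2$ submatrix of $(B_0,B_1)$ coefficients in $X^{AB}$ must have rank at most $1$, for otherwise no linear combination of the two rows of $X^{AB}$ can eliminate $B_0$ and $B_1$ simultaneously while retaining $A_2$. This is however the very same submatrix that was just shown to have rank $2$, yielding the contradiction and proving $G_1\neq B_2$.

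The essentially only obstacle is cleanly stating and justifying the two rank conditions: in a $2\times 3$ coefficient system, uniquely recovering one designated unknown requires the $2\times 2$ submatrix of the other two coefficients to have rank at most $1$ together with the designated column lying outside that column span, whereas uniquely recovering the other two requires that submatrix to have rank $2$ and the designated column to vanish. Both are routine null-space computations on a $2\times 3$ matrix, and the argument uses only decoding correctness, not the demand-privacy condition itself, which makes it cleaner than the corresponding step in Lemma~\ref{lm:f3-no2B}.
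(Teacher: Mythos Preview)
Your proposal is correct and follows essentially the same route as the paper: both arguments assume $G_1=B_2$, look at the transmission $X^{AB}$, and derive contradictory rank constraints on the $2\times 2$ submatrix of $(B_0,B_1)$ coefficients (rank $\leq 1$ from User~0's recovery of $A_2$, rank $=2$ from User~1's recovery of $B_0,B_1$). The only cosmetic difference is that the paper fixes $G_0=A_1$ without loss of generality rather than carrying the symbol $\widetilde{A}$.
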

\begin{proof}
We need to show $G_1\neq B_2$.
Since $Z_1$ already contains $A_2$, it can have either $A_0$ or $A_1$, both of which are in $Z_0$. Without loss of generality, let $G_0 = A_1$. 
Assume $G_1 = B_2$. Then $Z_1=\{B_2, A_1, A_2\}$
Consider the transmission
\begin{align}
    &X^{AB}=\left[\begin{array}{c} u\\v \end{array}\right]=\\& \left[\begin{array}{c}\alpha_0 A_0 + \alpha_1 A_1+ \alpha_2 A_2+\beta_0 B_0+\beta_1 B_1+\beta_2 B_2 \\
    \gamma_0 A_0+\gamma_1 A_1+\gamma_2 A_2+\delta_0 B_0+\delta_1 B_1+\delta_2 B_2\end{array}\right]\notag
\end{align}
For User~0, these equations reduces to
\begin{align}
    \left[\begin{array}{c} u'\\v' \end{array}\right]=\left[\begin{array}{c}\alpha_2 A_2+\beta_0 B_0+\beta_1 B_1 \\
    \gamma_2 A_2+\delta_0 B_0+\delta_1 B_1 \end{array}\right]
\end{align}
Since User~0 have no access to $B_0$ and $B_1$, for recovering $A_2$, we need
\begin{subequations}
\begin{align}
    \rk\left(\left[\begin{array}{cc}\beta_0 & \beta_1 \\
    \delta_0 & \delta_1 \end{array}\right]\right) &\leq 1\label{eq:f3-XAB-U0-B01rank1} \mbox{ and }\\
    \left[\begin{array}{c}\alpha_2  \\
    \gamma_2\end{array}\right] &\neq \left[\begin{array}{c}0  \\ 0 \end{array}\right] 
\end{align}
\end{subequations}
For User~1 the equations from $X^{AB}$ reduces to
\begin{align}
    \left[\begin{array}{c} u''\\v'' \end{array}\right]= \left[\begin{array}{c}\alpha_0 A_0 + \beta_0 B_0+\beta_1 B_1 \\
    \gamma_0 A_0+ \delta_0 B_0+\delta_1 B_1\end{array}\right]
\end{align}
For User~1 recovering $B_0$ and $B_1$, it requires
\begin{subequations}
\begin{align}
    \rk\left(\left[\begin{array}{cc}\beta_0 & \beta_1 \\
    \delta_0 & \delta_1 \end{array}\right]\right) &=2\label{eq:f3-XAB-U1-B01rank2} \mbox{ and }\\
    \left[\begin{array}{c}\alpha_0  \\
    \gamma_0\end{array}\right] &= \left[\begin{array}{c}0  \\ 0 \end{array}\right] 
\end{align}
\end{subequations}
Equations~\eqref{eq:f3-XAB-U1-B01rank2} and \eqref{eq:f3-XAB-U0-B01rank1} are contradictory, Hence $G_1 \neq B_2$. So $G_1 \in \{B_0, B_1\}$. 
\end{proof}

By Lemma~\ref{lm:z1-no-b2}, there are four possible choices for $Z_1$ as given below.
\begin{subequations}
\begin{align}
Z_a=\{A_1, A_2, B_0\}\\
Z_b=\{A_1, A_2, B_1\}\\
Z_c=\{A_0, A_2, B_0\}\\
Z_d=\{A_0, A_2, B_1\}
\end{align}
\end{subequations}

\begin{lemma}\label{lm:no-dp-2}
If $Z_0=\{A_0, A_1, B_2 \}$ and $Z_1\in \{Z_a,Z_b,Z_c,Z_d\}$, then demand privacy is not possible. 
\end{lemma}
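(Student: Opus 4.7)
The plan is to exploit the natural symmetry of the setup to reduce the four cases to a single representative, then repeat the rank-based calculus developed in Appendix~\ref{sec:nogo} to derive a contradiction.

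First, I would observe that the permutations $A_0 \leftrightarrow A_1$ and $B_0 \leftrightarrow B_1$ both fix the cache $Z_0 = \{A_0,A_1,B_2\}$ while acting transitively on the four candidates $\{Z_a,Z_b,Z_c,Z_d\}$ (for instance $A_0 \leftrightarrow A_1$ sends $Z_a \to Z_c$ and $Z_b \to Z_d$, while $B_0 \leftrightarrow B_1$ sends $Z_a \to Z_b$ and $Z_c \to Z_d$). Since relabelling subfiles preserves both decodability and demand privacy, it is enough to treat the single case $Z_1 = Z_a = \{A_1,A_2,B_0\}$.

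Next, I would write each transmission $X^{D_0D_1} = T^{D_0D_1} S$ with $S = (A_0,A_1,A_2,B_0,B_1,B_2)^\top$ as a rank-$2$ combination, and for each of the four demand pairs impose the decoding conditions. Eliminating each user's cached subfiles, User~$0$ effectively sees combinations in $\{A_2,B_0,B_1\}$ and User~$1$ in $\{A_0,B_1,B_2\}$. Recovery of the requested file amounts to a rank-$2$ condition on the appropriate $2\times 2$ submatrix and a rank-$\le 1$ condition on the complementary $2\times 2$ submatrix, exactly as in Lemma~\ref{lm:ci-t-rank}. Carrying this through yields very rigid forms for $T^{AA}$, $T^{AB}$, $T^{BA}$ and $T^{BB}$. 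In particular, $T^{BB}$ is essentially forced: both caches are missing $B_1$, so $X^{BB}$ must deliver $B_1$ to both users, which together with the rank constraints pins down its $B$-part up to a scalar.

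I would then bring in the demand privacy condition in the form used after Lemma~\ref{lm:tr}: for each observed transmission and each user there must exist an alternative cache from the same permissible list under which a different file is recovered. Since the alternative cache for User~$0$ must itself obey Lemmas~\ref{lm:f3-no2B} and \ref{lm:z1-no-b2} (applied with the roles reversed), it is again one of four explicit candidates, and similarly for User~$1$. The argument is then to check that the tightly constrained $T^{D_0D_1}$ from the previous step cannot simultaneously be decoded to the ``opposite'' file under any of these admissible alternative caches. Concretely, I expect the contradiction to surface for the pair $\{X^{AA},X^{AB}\}$ (or equivalently $\{X^{BA},X^{BB}\}$ by the $A\leftrightarrow B$ symmetry): the coefficient of $B_1$ in $X^{AA}$ is forced to zero by User~$1$'s recovery of $A$, yet non-zero by any candidate alternative cache that would recover $B$ for User~$1$, exactly mirroring the contradiction obtained at the end of Appendix~\ref{sec:nogo}.

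The main obstacle I anticipate is bookkeeping rather than any conceptual step: we have four demand vectors, each demanding a pair of rank conditions on $T^{D_0D_1}$, together with a privacy swap that must hold jointly against four candidate alternative caches on each side. A clean way to organize this is to build, for the fixed pair $(Z_0,Z_1)$, a $4\times 4$ file-recovery table analogous to Table~\ref{tab:schN2K2num2}, and argue that no consistent filling of that table is simultaneously decodable and privacy-preserving, thereby establishing Lemma~\ref{lm:no-dp-2}.
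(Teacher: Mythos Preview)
Your plan is essentially the paper's own argument: reduce by the subfile-relabelling symmetry to the single case $Z_1=Z_a$, list the admissible alternative caches for each user (the paper does this explicitly, obtaining $\{Z_e,Z_f,Z_g,Z_h\}$ for User~$0$ by applying Lemmas~\ref{lm:f3-otherSubfile}--\ref{lm:z1-no-b2} with the roles swapped), and then use the rank constraints from decoding to rule out every alternative. The paper is more economical than your outline: it works with the single transmission $X^{AB}$ rather than all four demand vectors, derives four constraints (Eqs.~\eqref{eq:f3-xab-b01rank2}--\eqref{eq:f3-xab-noA0}), and checks directly that each of $Z_b,Z_c,Z_d$ (resp.\ $Z_e,Z_g,Z_h$) fails to recover $A$ (resp.\ $B$) from $X^{AB}$, leaving only $Z_d=Z_g$, which cannot recover both files from five equations in six unknowns.

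One small correction to your sketch: User~$1$'s recovery of $A$ from $X^{AA}$ does \emph{not} force the $B_1$ coefficients to vanish; it only forces the $2\times 2$ submatrix on $(B_1,B_2)$ to have rank at most $1$ (so that some combination eliminates both). The actual contradiction in the paper is not a zero-versus-nonzero clash on a single coefficient but the incompatibility of the rank constraints \eqref{eq:f3-xab-b01rank2}--\eqref{eq:f3-xab-noA0} with decoding by any admissible alternative cache. Your table-based bookkeeping would get you there, but working through the $X^{AB}$ case alone, as the paper does, is enough.
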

\begin{proof}
It suffices to show demand privacy is not possible for $Z_1=Z_a$, since we can arrive at the other cache combinations by relabeling.

Suppose the cache $Z_a=\{A_1, A_2, B_0\}$ is assigned to the User~1. Then, by arguments similar to Lemmas~\ref{lm:f3-otherSubfile},\ref{lm:f3-no2B} and \ref{lm:z1-no-b2}, the User~1 is aware that the cache of User~0 must have two subfiles of $A$, with one being $A_0$ and the subfile of $B$ in $Z_0$ is not $B_0$. The four possible caches for $Z_0$ consistent with $Z_1=Z_a$ are given below. 
\begin{subequations}
\begin{align}
Z_e=\{A_0, A_1, B_1\}\\
Z_f=\{A_0, A_1, B_2\}\\
Z_g=\{A_0, A_2, B_1\}\\
Z_h=\{A_0, A_2, B_2\}
\end{align}
\end{subequations}
Note that $Z_0=Z_f$. For demand privacy we need the caches consistent with $Z_0$ to be able to recover both files and vice versa.

Consider the transmission
\begin{align}
    &X^{AB}=\left[\begin{array}{c} u\\v \end{array}\right]=\\& \left[\begin{array}{c}\alpha_0 A_0 + \alpha_1 A_1+ \alpha_2 A_2+\beta_0 B_0+\beta_1 B_1+\beta_2 B_2 \\
    \gamma_0 A_0+\gamma_1 A_1+\gamma_2 A_2+\delta_0 B_0+\delta_1 B_1+\delta_2 B_2\end{array}\right]\notag
\end{align}

For the User~0 with cache $Z_0=\{A_0, A_1, B_2\}$ the transmission $X^{AB}$  reduces to the following set of equations after eliminating the subfiles which are already present in $Z_0$.
\begin{align}
    \left[\begin{array}{c} u_0\\v_0 \end{array}\right]=\left[\begin{array}{c}\alpha_2 A_2+\beta_0 B_0+\beta_1 B_1 \\
    \gamma_2 A_2+\delta_0 B_0+\delta_1 B_1 \end{array}\right]
\end{align}
For User~0 whose demand is $A$ already has $A_0$ and $A_1$. Only $A_2$ needs to be  recovered from 
$X^{AB}$.
This is possible only if the following conditions are satisfied. 
\begin{subequations}
\begin{align}
    \rk\left(\left[\begin{array}{cc}\beta_0 & \beta_1 \\
    \delta_0 & \delta_1 \end{array}\right]\right) &\leq 1 \label{eq:f3-xab-b01rank2} \mbox{ and }\\
    \left[\begin{array}{c}\alpha_2  \\
    \gamma_2\end{array}\right] &\neq \left[\begin{array}{c}0  \\ 0\label{eq:f3-xab-a2not0} \end{array}\right] 
\end{align}
\end{subequations}

Similarly, for the User~1, whose cache is $Z_1=Z_a$, the transmission $X^{AB}$ reduces to
\begin{align}
    \left[\begin{array}{c} u_1\\v_1 \end{array}\right]=\left[\begin{array}{c}\alpha_0 A_0+\beta_1 B_1+\beta_2 B_2 \\
    \gamma_0 A_0+\delta_1 B_1+\delta_2 B_2 \end{array}\right]
\end{align}
For User~1 to recover $B_0$ and $B_1$, the following conditions must be satisfied.
\begin{subequations}
\begin{align}
    \rk\left(\left[\begin{array}{cc}\beta_1 & \beta_2 \\
    \delta_1 & \delta_2 \end{array}\right]\right) &=2 \label{eq:f3-xab-b12rank2}\mbox{ and }\\
    \left[\begin{array}{c}\alpha_0  \\
    \gamma_0\end{array}\right] &= \left[\begin{array}{c}0 \label{eq:f3-xab-noA0} \\ 0 \end{array}\right] 
\end{align}
\end{subequations}

The reduced equations for $Z_b$ are
\begin{align}
    \left[\begin{array}{c} u_b\\v_b \end{array}\right]=\left[\begin{array}{c}\alpha_0 A_0+\beta_0 B_0+\beta_2 B_2 \\
    \gamma_0 A_0+\delta_0 B_0+\delta_2 B_2 \end{array}\right]
\end{align}

The reduced equations for $Z_c$ are
\begin{align}
    \left[\begin{array}{c} u_c\\v_c \end{array}\right]=\left[\begin{array}{c}\alpha_1 A_1+\beta_1 B_1+\beta_2 B_2 \\
    \gamma_1 A_1+\delta_1 B_1+\delta_2 B_2 \end{array}\right]
\end{align}

The reduced equations for $Z_d=Z_g$ are
\begin{align}
    \left[\begin{array}{c} u_d\\v_d \end{array}\right]=\left[\begin{array}{c}\alpha_1 A_1+\beta_0 B_0+\beta_2 B_2 \\
    \gamma_1 A_1+\delta_0 B_0+\delta_2 B_2 \end{array}\right]
\end{align}

The reduced equations for $Z_e$ are
\begin{align}
    \left[\begin{array}{c} u_e\\v_e \end{array}\right]=\left[\begin{array}{c}\alpha_2 A_2+\beta_0 B_0+\beta_2 B_2 \\
    \gamma_2 A_2+\delta_0 B_0+\delta_2 B_2 \end{array}\right]
\end{align}

The reduced equations for $Z_h$ are
\begin{align}
    \left[\begin{array}{c} u_h\\v_h \end{array}\right]=\left[\begin{array}{c}\alpha_1 A_1+\beta_0 B_0+\beta_1 B_1 \\
    \gamma_1 A_1+\delta_0 B_0+\delta_1 B_1 \end{array}\right]
\end{align}

From the above constraints and the reduced equations for all users we can infer the following.
\begin{enumerate}
    \item Due to  Eq.~\eqref{eq:f3-xab-noA0}, $Z_b$ cannot recover file $A$ since it has no access to $A_0$.
    \item Due to Eq.~\eqref{eq:f3-xab-b12rank2}, and since $Z_c$ has no access to $B_1$ and $B_2$, it cannot cannot eliminate them from the transmission to recover file $A$.
    \item Due to Eq.~\eqref{eq:f3-xab-a2not0}, $Z_e$ cannot recover file $B$.
    \item Due to Eq.~\eqref{eq:f3-xab-b01rank2}, $Z_h$ cannot recover file $B$.
\end{enumerate}

Hence, from the four possible caches for $Z_1$, the only cache that might be able to recover file $A$ and might achieve privacy for User~1 is $Z_d$. But since there are only five equations from the cache and transmissions, it is impossible for $Z_d$ to recover all the six subfiles $A_0, A_1, A_ 2, B_0, B_1, B_2$ and thus recover file $B$ also. That means, no possible cache for User~0 consistent with $Z_1$ is able to recover file $B$ from $X^{AB}$. This results in no demand privacy for User~0. 

On the other hand, if $X^{AB}$ is such that $Z_d$ can recover file $B$, then it results in no privacy for User~1. 
\end{proof}
Note that any consistent set of caches for User~0 and User~1 can be obtained by permuting subfile labels of file $A$ and file $B$ (permutation $\pi_A$ to relabel $A$ and $\pi_B$ to relabel $B$). Applying the same  relabeling, the above proof will hold true for them as well. This concludes the proof of Lemma~\ref{lem:nosub3noncoded}.
\end{document}